\let\MYcaption\@makecaption
\let\@makecaption\MYcaption
\theoremstyle{plain}
\newtheorem{lemma}{Lemma}
\newtheorem{proposition}{Proposition}
\newtheorem*{theorem*}{Theorem}
\newtheorem{assumption*}{Assumption}
\declaretheorem[name=Theorem]{thm}
\theoremstyle{definition}
\newtheorem{remark}{Remark}
\newtheorem{definition}{Definition}
\newtheorem{problem}{Problem}
\newtheorem*{problem*}{Problem}
\newcommand{\myset}[1]{\mathcal{#1}}
\DeclareMathOperator*{\argmin}{argmin}
\title{ \LARGE \bf
    Computationally Efficient Safe Control of Linear Systems under \\  Severe Sensor Attacks
}
\author{Xiao Tan, Pio Ong, Paulo Tabuada,  and Aaron D. Ames%
\thanks{This work is supported by TII under project \#A6847.}
\thanks{Xiao Tan, Pio Ong, and Aaron D. Ames are with the the Department of Mechanical and Civil Engineering, California Institute of Technology, Pasadena, CA 91125, USA (Email: {\tt\small xiaotan, pioong, ames@caltech.edu}).}    
\thanks{Paulo Tabuada is with the Department of Electrical and Computer Engineering at University of California, Los Angeles, CA 90095, USA (Email: {\tt\small  tabuada@ucla.edu}).} 
}
\begin{document}

\maketitle
\thispagestyle{plain}
\pagestyle{plain}

\begin{abstract}

Cyber-physical systems are prone to sensor attacks that can compromise safety. 
A common approach to synthesizing controllers robust to sensor attacks is secure state reconstruction (SSR)---but this is computationally expensive, hindering real-time control. 
In this paper, we take a safety-critical perspective on mitigating severe sensor attacks, leading to a computationally efficient solution. 
Namely, we design feedback controllers that ensure system safety by directly computing control actions from past input-output data. Instead of fully solving the SSR problem, we use conservative bounds on a control barrier function (CBF) condition, which we obtain by extending the recent eigendecomposition-based SSR approach to severe sensor attack settings. 
Additionally, we present an extended approach that solves a smaller-scale subproblem of the SSR problem, taking on some computational burden to mitigate the conservatism in the main approach. Numerical comparisons confirm that the traditional SSR approaches suffer from combinatorial issues, while our approach achieves safety guarantees with greater computational efficiency.

  \end{abstract}

\section{Introduction}

Cyber-physical systems (CPS) integrate computational elements with physical processes in a wide range of applications, including vehicles, transportation systems, and power grids. 
CPS systems rely heavily on sensor data for decision-making and feedback control, making them prime targets for attacks. These attacks aim to deceive the system by manipulating sensor readings, potentially leading to the system taking unsafe actions that are critical to overall system integrity, as reported in \cite{amin2010stealthy,liu2011false,kwon2013security}. For instance, \cite{shoukry2013non} demonstrates a real-world experiment where a vehicle's anti-braking system is maliciously triggered by spoofing onboard encoder signals. In light of these threats, there is a pressing need in computationally efficient, safeguarding controllers that can mitigate possibly unsafe controls in real time.

Guaranteeing safety under sensor attacks is challenging because safeguarding controllers have to rely on unreliable or even deliberately 
misleading sensor measurements.  The control barrier function (CBF) \cite{Ames2017}  framework has recently gained popularity for its safety filter formulation, providing Lyapunov-like necessary and sufficient conditions for forward invariance of the safe set. Based on the CBF approach, there are research efforts on obtaining safety guarantees in the presence of imperfect sensors. For example, the measurement-robust CBF \cite{dean2021guaranteeing} is proposed to deal with bounded measurement uncertainty; Belief CBF \cite{vahs2023belief} considers  measurement uncertainty that follows a probabilistic distribution; Risk-adverse CBF \cite{singletary2022safe} is also proposed for handling obstacles with uncertain positions in the configuration space. In the sensor attack scenarios, existing researches either assume that the attack signal is bounded \cite{lin2022plug}, stochastic \cite{zhang2022safe}, or performing a certain attacking strategy\cite{arnstrom2024stealthy}. CBFs have also been used to address privacy concerns in CPS systems\cite{zhong2023secure}. All above approaches, however, can not deal with intelligent sensor attacks that are designed to be deceptive and not conforming to any aforementioned patterns. 

To safeguard the system from deceptive sensor attacks, most existing approaches focuses on reconstructing the system state. This allows for applying a safety filter design based on the estimated state as in the attack-free case. Hereafter we refer to this approach as the two-stage approach. However, as shown in \cite{mao2022computational}, reconstructing the system state is intrinsically an NP-hard problem. Brute-force approach as in \cite{chong2015observability,tan2024safety} only works well for small-size problems since the computational complexity
grows combinatorially as the number of sensors grows. This becomes a bottleneck issue when real-time feedback is needed. Computationally efficient secure state reconstruction (SSR) algorithms exist, to name a few, the convex relaxation \cite{yong2016robust},  satisfiability modulo
theory techniques \cite{shoukry2018smc}, and specialized observers \cite{shoukry2015event}. In particular, an efficient polynomial-time SSR algorithm was recently proposed in \cite{mao2022computational} that decomposes the original problem into a set of sub-problems that are much easier to solve.  An outstanding assumption for these efficient SSR algorithms is that, the true system state can be uniquely determined from the corrupted input-output data, implying that the attack is sparse. 
On the other hand, in severe sensor attack scenarios,
the plausible state estimate is possibly not unique~\cite{tan2024safety}, hindering the application of above algorithms.

\begin{figure*}[b]
    \begin{equation} \label{eq:matrices} \tag{7}
    \widetilde{Y}_i := \begin{bsmallmatrix}
        y_i(0) \\
        y_i(1) \\
        \vdots \\
        y_i(t)
    \end{bsmallmatrix},
    E_i := \begin{bsmallmatrix}
        e_i(0) \\
        e_i(1) \\
        \vdots \\
        e_i(t)
    \end{bsmallmatrix},
    U := \begin{bsmallmatrix}
        u(0) \\
        u(1) \\
        \vdots\\
        u(t-1) \\
        0
    \end{bsmallmatrix},
    \mathcal{O}_i := \begin{bsmallmatrix}
        C_i \\
        C_i A \\
        \vdots \\
        C_i A^{t}
    \end{bsmallmatrix},  F_i := \begin{bsmallmatrix}
        0 & 0 & \cdots &  0 & 0 \\
        C_iB & 0 & \cdots & 0 & 0  \\
        \vdots & & \ddots & & 0 \\
        C_iA^{t-1}B &  C_iA^{t-2}B & \cdots & C_i B & 0
    \end{bsmallmatrix},     
    \begin{aligned}
    Y_i & : = \widetilde{Y}_i - F_i U \\
      &  \text{for } i \in [p].
    \end{aligned}
\end{equation}
\end{figure*}

In this paper, we revisit the safe control problem from our previous work \cite{tan2024safety} and consider system safety for linear systems under severe sensor attacks. In continuation of the theoretical development in \cite{tan2024safety}, our focus is on the computationally efficiency of control designs. Instead of adopting the two-stage approach, we propose to directly compute the safe control condition in the CBF framework, without explicitly recovering the state. Our contributions are summarized as follows:
\begin{enumerate}
    \item We extend the decomposition-based SSR algorithm in \cite{mao2022computational} to address severe sensor attacks. We show that the extended algorithm still exhibits exponential computational complexity in such cases.
    \item Building on the extended algorithm, we develop an algorithm to compute a bound to the CBF condition. This bound enables  an efficient computation of control inputs, eliminating the high computational complexity while sufficiently ensuring system safety. Additionally,
    we discuss its conservatism and introduce another algorithm that mitigates it.
    \item We implement three approaches: a brute-force SSR, an extended decomposition-based SSR, and our proposed computationally efficient algorithm. We empirically compare them in terms of computational time and optimality gap, demonstrating the explicit trade-off between efficiency and conservatism.
\end{enumerate}
Ultimately, this paper presents an online safe control computation scheme, for linear systems, that is computationally efficient even in the presence of severe sensor attacks. 

\textbf{Notation:} For $ w\in \mathbb{N}$, define $[w]:=\{1,2,\ldots,w\}$. The cardinality of a set $\myset{I}$ is denoted by $|\myset{I}|$. Given a $w\in \mathbb{N}$, a $k$-combination from $[w]$ is a subset of $[w]$ with cardinality $k$. Denote by $\mathbb{C}_{w}^k$ the set of all $k$-combinations from $[w]$. For a matrix $C\in \mathbb{R}^{w\times n}$ and an index set $\Gamma\subseteq[w]$, denote by $C_{\Gamma}$ the matrix obtained from $C$ by removing all the rows with indices not in $\Gamma$. For a square matrix $A\in \mathbb{R}^{n\times n}, \textup{sp}(A)$ represents the set of all eigenvalues of $A$. For any two sets  $\mathcal{S}_1, \mathcal{S}_2$, the Cartesian product $\mathcal{S}_1\times  \mathcal{S}_2$ is given by $ \{ (s_1, s_2): s_1 \in \myset{S}_1, s_2\in \myset{S}_2\}$.

\section{Preliminaries and problem formulation}

Consider a discrete-time linear system under sensor attacks
\begin{equation}\label{eq:system}
    \begin{aligned}
    x(\tau+1) & = Ax(\tau) + Bu(\tau),\\
     y(\tau)  & = Cx(\tau) + e(\tau), 
     \end{aligned}
\end{equation}
where $x\in \mathbb{R}^n$, $u\in \mathbb{R}^m$, $y\in \mathbb{R}^p$, and $e\in\mathbb{R}^p$ are the state, the input, the measurement signals, and the sensor attack signals, respectively. Each entry $y_i(\tau)$ corresponds to 
one sensor measurement at time $\tau$, and we view the sensor $i$ as being under attack at time $\tau$ if $e_i(\tau)$, the corresponding $i$th entry of the attack signal $e(\tau)$, is nonzero. Throughout the paper we assume that the attacker has full knowledge of the system, including the states, the dynamics~\eqref{eq:system}, and our defense strategy. The only restriction is that it can attack at most $s$ sensors, with its choices being unknown but fixed for the entire duration. Mathematically, the index set of the unknown attacked sensors is given by $ \Gamma_A = \{i \in [p]: \exists  \tau \geq 0, ~  e_i(\tau) \neq 0\}$, and we assume:
\begin{equation} \label{eq:sensor attack assumption}
    |\Gamma_A|\leq s, s\in \mathbb{N}.
\end{equation} 

Furthermore, we consider a polytopic safety constraint
defined by a vector linear function $h: \mathbb{R}^n \to \mathbb{R}^l$ as 
\begin{equation} \label{eq:safety_requirement}
    \myset{C} = \{x\in \mathbb{R}^n:  h(x):= Hx + g \geq 0 \}
\end{equation}
with $H\in\mathbb{R}^{l\times n}$ and $g\in\mathbb{R}^l$. Our safety requirement is to constrain the system trajectory within the safe set~$\mathcal{C}$ in the presence of attacks. More discussions on the problem setting can be found in \cite{tan2024safety}.

\subsection{Discrete-time safety filter}
One approach for enforcing safety constraints is the control barrier function (CBF) framework \cite{Ames2017, agrawal2017discrete}. Given a nominal control signal $t\mapsto u_{\textup{nom}}(t)$, the CBF safety filter modifies the signal with: 
\begin{align}
        & u(t)  = \argmin_u \| u - u_{\textup{nom}}(t) \|   \label{eq:qp_control_cost}\\
       \text{s. t. } &    H(Ax + Bu) + g  \geq (1-\gamma) (Hx + g), \forall x \in \myset{X}_t^{t}. \label{eq:qp_control_constraint}
\end{align}
The filter effectively picks an input $u$ with the least deviation, pointwise, from the nominal one, in order to satisfy the CBF condition in \eqref{eq:qp_control_constraint}. From CBF theory, once the input $u(t)$ satisfies the CBF condition for the true system state $x=x_{\text{true}}(t)$ at each time $t$, the resulting state trajectory will be contained in $\mathcal{C}$.
However, we do not know the system state exactly due to the presence of the attack signal $e$, so we enforce the CBF condition for all \textit{plausible states} $x$ from the set $\myset{X}_t^{t}$, which we describe in the following subsection.

\subsection{Secure state reconstruction}

We summarize some existing results on the secure state reconstruction problem. At time $t$, we collect the input-output data $\mathcal{D}_t = (u(0),u(1), \dots, u(t-1),  y(0),y(1), \dots, y(t))\in \mathbb{R}^{mt + p(t+1)}$ after receiving the latest measurement and before choosing a control input. We only consider the case that enough measurements are collected, namely, $t\geq n$. We refer to a state $x_0$ as a \emph{plausible initial state} if there exists a trajectory $x$ of system \eqref{eq:system}  that starts at $x_0$, i.e., $x(0) = x_0$, and is consistent with the input-output data $\mathcal{D}_t$, accounting for potential attacking signals $e$ under the attack model~\eqref{eq:sensor attack assumption}. A plausible state at time $t$ can be similarly defined. The secure state reconstruction (SSR) problem thus aims to recover all plausible system states at time $t$. It has been established that any plausible initial state satisfies the following equality:
\begin{equation}\label{eq:system_compact}
    Y_i  = \mathcal{O}_i x(0)+ E_i, \ i\in [p]
\end{equation}
where the matrices are defined in \eqref{eq:matrices}.

\setcounter{equation}{7}

The set of all plausible initial states $ \myset{X}_t^0$ at time $t$ has been characterized in \cite{tan2024safety}, as summarized in Lemma~\ref{lem:X0}.
\begin{lemma} \label{lem:X0}
    The set of plausible initial states $\myset{X}_t^0$ fulfills:
    \begin{equation} \label{eq:X0set}
        \myset{X}_t^0 = \bigcup_{\Gamma = \{i_1, \ldots, i_{p-s}\} \in \mathbb{C}_{p}^{p-s}} 
 \myset{X}_{t}^{0,\Gamma} 
    \end{equation}
  where   $ \myset{X}_{t}^{0,\Gamma} = \{x\in \mathbb{R}^n: \mathcal{O}_{i}x = Y_{i} \textup{ for } i \in \Gamma \}$.
\end{lemma}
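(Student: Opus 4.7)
The plan is to prove the set equality in \eqref{eq:X0set} by establishing both inclusions, leveraging the characterization \eqref{eq:system_compact} of plausibility together with the attack budget \eqref{eq:sensor attack assumption}. The key observation is that among $p$ sensors at most $s$ are attacked, so at least $p-s$ sensors are clean, and for each clean sensor $i$ one has $E_i = 0$, i.e. $Y_i = \mathcal{O}_i x_0$.

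For the forward inclusion $\myset{X}_t^0 \subseteq \bigcup_\Gamma \myset{X}_t^{0,\Gamma}$, I would take any plausible initial state $x_0 \in \myset{X}_t^0$. By definition there exists an attack signal $e$ consistent with $\mathcal{D}_t$, so \eqref{eq:system_compact} holds for all $i \in [p]$, and the attacked sensor set $\Gamma_A$ satisfies $|\Gamma_A| \leq s$. Hence the clean index set $[p]\setminus\Gamma_A$ has cardinality at least $p-s$, and I can choose any $\Gamma \subseteq [p]\setminus \Gamma_A$ with $|\Gamma| = p-s$. For every $i \in \Gamma$ we have $E_i = 0$, so $\mathcal{O}_i x_0 = Y_i$, which puts $x_0 \in \myset{X}_t^{0,\Gamma}$.

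For the reverse inclusion $\bigcup_\Gamma \myset{X}_t^{0,\Gamma} \subseteq \myset{X}_t^0$, I would fix some $\Gamma \in \mathbb{C}_p^{p-s}$ and some $x_0 \in \myset{X}_t^{0,\Gamma}$, and explicitly exhibit an admissible attack signal. Specifically, I would define $E_i := 0$ for $i \in \Gamma$ and $E_i := Y_i - \mathcal{O}_i x_0$ for $i \in [p]\setminus\Gamma$. Then \eqref{eq:system_compact} is satisfied by construction for every $i \in [p]$, and the implied attacked-sensor index set is contained in $[p]\setminus\Gamma$, which has cardinality exactly $s$. From $E_i$ one recovers a per-time-step attack signal $\{e_i(\tau)\}_{\tau=0}^{t}$ entry-wise, with $e_i(\tau)=0$ for all $\tau$ whenever $i\in\Gamma$, so assumption \eqref{eq:sensor attack assumption} holds. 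Therefore $x_0$ is consistent with $\mathcal{D}_t$ under a valid attack, giving $x_0 \in \myset{X}_t^0$.

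The argument is essentially bookkeeping, and I do not expect any real obstacle. The only subtle point is a matching of definitions: the attack model \eqref{eq:sensor attack assumption} constrains the set of sensors that are \emph{ever} attacked rather than the number of nonzero entries in $E_i$ at each time, so in the reverse inclusion it is important that the constructed $E_i$ vanishes for all time indices whenever $i \in \Gamma$. This is exactly what the definition $E_i = 0$ (as a vector) enforces, so the attack set is a subset of $[p]\setminus\Gamma$ regardless of the nonzero pattern of the $E_i$'s for $i \notin \Gamma$.
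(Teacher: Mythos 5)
Your proof is correct and follows exactly the idea the paper itself gives for Lemma~\ref{lem:X0} (which it only sketches, deferring to \cite{tan2024safety}): clean sensors force $E_i=0$ so any plausible state lies in some $\myset{X}_t^{0,\Gamma}$, and conversely any solution for a $(p-s)$-subset is realized by assigning the residuals $Y_i-\mathcal{O}_i x_0$ as attack signals on the complementary $s$ sensors. Your closing remark about the attack support being fixed over time (so $\Gamma_A\subseteq[p]\setminus\Gamma$ regardless of the nonzero pattern of the residuals) is precisely the right bookkeeping to make the reverse inclusion rigorous.
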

The result takes advantage of the fact that 
the attack vector $E_i$ from the $i$th sensor must be identically zero for at least $p-s$ sensors. The set $\myset{X}_{t}^{0}$ collects different plausible initial states, each of which solves \eqref{eq:system_compact} for $p-s$ sensors that are assumed to be attack-free.

From plausible initial states, we can leverage system dynamics and past input sequence to derive the set of all plausible states $\myset{X}_t^{t}$  at the current time $t$:
\begin{multline}
    \label{eq:X_t}
    \myset{X}_t^{t} = A^t( \myset{X}_{t}^{0}) + A^{t-1}Bu(0) +  \ldots + Bu(t-1)
\end{multline} 
The cardinality of the set $ \myset{X}_t^t$ is the same as that of the set $\myset{X}_t^0$, which turns out to be closely related to a system-level property termed sparse observability, defined below.
\begin{definition}[$k$-sparse observability]
    System \eqref{eq:system} is \emph{$k$-sparse observable} if $(A, C_{\Gamma})$ is observable for every  $\Gamma\in \mathbb{C}_{p}^{p-k}$.
\end{definition}

\begin{remark}
    
    In practice, it is unnecessary to incorporate all the collected input-output data for state reconstruction at time $t$. Instead, we can adopt a receding-horizon approach---using data from last $l\geq n$ time steps to reconstruct the plausible state $ \myset{X}_t^{t-l}$ and forward propagate this set to obtain $ \myset{X}_t^{t}$. This strategy reduces forward-propagation error and ensures constant memory usage and computation time.

\end{remark}

\begin{lemma} [\hspace{-0.03mm}\cite{tan2024safety,fawzi2014secure}] \label{lem:CDC paper results}
The following statements hold true. 
    \begin{enumerate}
        \item When the system \eqref{eq:system} is not $s$-sparse observable, $\myset{X}_t^0 $ can be an infinite set.
        
        \item When the system \eqref{eq:system} is $s$-sparse observable, $\myset{X}_t^0 $ is finite.
        
        \item When the system \eqref{eq:system} is $2s$-sparse observable, $\myset{X}_t^0 $ contains only the true initial state.

    \end{enumerate}
\end{lemma}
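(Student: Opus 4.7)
The plan is to exploit the explicit decomposition $\myset{X}_t^0 = \bigcup_{\Gamma \in \mathbb{C}_p^{p-s}} \myset{X}_t^{0,\Gamma}$ from Lemma~\ref{lem:X0}. Each piece $\myset{X}_t^{0,\Gamma}$ is the solution set of the stacked linear system $\mathcal{O}_\Gamma x = Y_\Gamma$, where $\mathcal{O}_\Gamma$ and $Y_\Gamma$ denote the vertical concatenations of $\mathcal{O}_i$ and $Y_i$ over $i \in \Gamma$. Its size is governed by $\ker(\mathcal{O}_\Gamma)$, and since $t \geq n$, triviality of this kernel is equivalent to $(A, C_\Gamma)$ being observable. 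So the three statements should reduce to rank/kernel arguments on the restricted observability matrix.

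For statement (1), I would exhibit an adversarial scenario realizing an infinite $\myset{X}_t^0$. If the system fails to be $s$-sparse observable, there exists some $\Gamma_0 \in \mathbb{C}_p^{p-s}$ with $\ker(\mathcal{O}_{\Gamma_0}) \neq \{0\}$. Taking the zero-attack case, the true initial state $x^*$ lies in $\myset{X}_t^{0,\Gamma_0}$, and so does $x^* + v$ for every $v \in \ker(\mathcal{O}_{\Gamma_0})$; hence $\myset{X}_t^0 \supseteq \myset{X}_t^{0,\Gamma_0}$ is infinite. For statement (2), $s$-sparse observability forces every $\mathcal{O}_\Gamma$ with $\Gamma \in \mathbb{C}_p^{p-s}$ to have full column rank, so each $\myset{X}_t^{0,\Gamma}$ is either empty or a singleton; since $\mathbb{C}_p^{p-s}$ is finite, $|\myset{X}_t^0| \leq \binom{p}{s}$.

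For statement (3), the plan is a proof by contradiction. Suppose there exists $x' \in \myset{X}_t^0$ with $x' \neq x^*$. Then $x' \in \myset{X}_t^{0,\Gamma}$ for some $\Gamma \in \mathbb{C}_p^{p-s}$, so $\mathcal{O}_i x' = Y_i = \mathcal{O}_i x^* + E_i$ for all $i \in \Gamma$. Since $E_i = 0$ for $i \notin \Gamma_A$ and $|\Gamma_A| \leq s$, this yields $\mathcal{O}_i(x' - x^*) = 0$ on $\Gamma \setminus \Gamma_A$, an index set of cardinality at least $p - 2s$. Restricting to any $\Gamma'' \subseteq \Gamma \setminus \Gamma_A$ with $|\Gamma''| = p - 2s$, the $2s$-sparse observability assumption makes $\mathcal{O}_{\Gamma''}$ injective, forcing $x' = x^*$, a contradiction. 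The principal subtlety is this support-counting step: ensuring that the discrepancy between the two plausibility certificates is supported on at most $2s$ sensors, so that at least $p - 2s$ trustworthy equations remain; once this is in place, the remaining steps are definitional.
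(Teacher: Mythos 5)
Your proposal is correct, but note that the paper itself offers no proof of this lemma: it is stated as a citation to \cite{tan2024safety,fawzi2014secure}, so there is no in-paper argument to compare against. Your three arguments---the nontrivial-kernel construction under a zero attack for claim (1), the injectivity of each $\mathcal{O}_\Gamma$ plus finiteness of $\mathbb{C}_p^{p-s}$ for claim (2), and the support-counting step showing the two certificates agree on at least $p-2s$ intact sensors for claim (3)---are exactly the standard arguments from the cited references, and the one subtlety you flag (that $t \geq n$ makes the kernel of the tall observability matrix coincide with the unobservable subspace of $(A,C_\Gamma)$) is handled correctly.
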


The computation of the set of plausible initial states is difficult.
In fact, it has been shown to be NP-hard in general \cite{mao2022computational}.  Equation \eqref{eq:X0set} provides one approach to compute the plausible set, which however requires an enumeration of all possible attacked sensors. This poses a heavy computational burden and becomes a bottleneck for online implementation as the number of sensors increases. Under certain conditions, it is possible to perform SSR in polynomial time as discussed in \cite{mao2022computational}. We will elaborate on this algorithm, show its limitations, and proposes our extensions in following sections.

\begin{problem}[Safety]\label{problem:safety}
    Given input-output data $\mathcal{D}_t$ collected from a discrete-time linear system \eqref{eq:system}, efficiently compute a safe control input satisfying the CBF condition in \eqref{eq:qp_control_constraint}. 
\end{problem}

\begin{problem}[Minimally invasive modification] \label{problem:min_invasive}
    Further find conditions under which the obtained control is minimally invasive with respect to the cost function in \eqref{eq:qp_control_cost}. 
\end{problem}

\section{SSR by decomposition is less efficient under severe sensor attacks} \label{sec:SSR-by-decomposition}

In this section, we review the decomposition-based SSR algorithm from \cite{mao2022computational},  discuss its limitations, and propose an extension to deal with severe sensor attacks. These results lays the necessary groundwork for our computationally efficient safe control design in Section \ref{sec:proposed design}.

\subsection{Generalized eigenspace decomposition of the system}

The core idea to reducing computational complexity is to decompose the linear system in~\eqref{eq:system} into subsystems, with each one corresponding to a generalized eigenspace.
Denoting all distinct $r\leq n$ eigenvalues of the matrix $A$ by $\{\lambda_j\}_{j\in[r]}$, the generalized eigenspace corresponding to each $\lambda_j$ is  $V^j:=\{v\in \mathbb{C}^{n}: (A-\lambda_j I)^{\alpha(\lambda_j)} v = 0 \}$, where $\alpha(\lambda_j)$ is the corresponding algebraic multiplicity. Then, there exist projection matrices\footnote{Properties and constructions of projection matrices are in the Appendix.} $P_j: \mathbb{R}^n \to V^j$ for each generalized eigenspace, such that $x=\sum_{j\in[r]}P_jx$. In what follows, we use  ${i\in [p]}$, ${j\in [r]}$, and ${k\in [l]}$ to iterate over different sensors, generalized eigenspaces, and constraints, respectively. We also use the notation $\mathrm{x}_j$ for a variable that belongs to the subspace $V^j$ for $ j\in [r]$.

The decomposition of the measurement $y\in\mathbb{R}^p$ is also possible. Particularly, the result in \cite[Lemma 1]{mao2022computational} shows that for any $i\in [p]$, the vector spaces $ \{\mathcal{O}_i(V^j)\}_{j\in [r]}$ form a basis for the vector space $\mathcal{O}_i(\mathbb{R}^n)$. As such, any $Y_i\in \mathcal{O}_i(\mathbb{R}^n)$ decomposes uniquely into $\{Y_i^j\}_{j\in[r]}$ such that  $Y_i^j \in \mathcal{O}_i(V^j)$ and $Y_i = \sum_{j\in[r]} Y_i^j$. We can construct projection matrices $\tilde{P}_i^j: \mathcal{O}_i(\mathbb{R}^n) \to \mathcal{O}_i(V^j)$ for all $i\in [p],~j\in [r]$, such that $Y_i^j = \tilde P_i^j Y_i$. Therefore, we have
\begin{equation}\label{eq:projection}
    \begin{aligned}
        \forall x\in \mathbb{R}^n,\quad \sum_{j\in [r]} P_j x = x \text{ and} \sum_{j\in[r]} \tilde{P}_i^j  \mathcal{O}_i x =  \mathcal{O}_i x.
    \end{aligned}
\end{equation}
Note that since the projection matrices are only related to system matrices $(A, C)$, we can compute them offline.

\subsection{Sub-SSR problems}
The eigen-decomposition produces subsystems that correspond to the subspaces $\{V^j\}_{j\in [r]}$ as: 
\begin{equation} \label{eq:subsystems}
    \begin{aligned}
    \mathrm{x}_j(\tau+1) & = P_jA \mathrm{x}_j(\tau) + P_j Bu(\tau),\\
     y_i^j(\tau)  & = C_iP_j x(\tau) + e_i^j(\tau), \ i\in [p],
     \end{aligned}
\end{equation}
where $y_i^j$ and $e_i^j$ are the $\tau$-th element in the vectors $Y_i^j$ and $E_i^j = \tilde{P}_i^j E_i$, respectively. A compact reformulation is:
\begin{equation} \label{eq:subsystems_compact}
    Y_i^j = \mathcal{O}_i^j \mathrm{x}_j(0) + E_i^j,~i\in [p],
\end{equation}
with $\mathcal{O}_i^j = \tilde{P}_i^j \mathcal{O}_i$. Note
the attack vector $E_i^j$ is nonzero only if the sensor $i$ is attacked. The sub-SSR problem seeks to find all plausible initial states for this subsystem given input-output data $\mathcal{D}_t^j = (U,  \{Y_i^j \}_{ i\in [p]})$.
Related to this is the notion of eigenvalue observability. 

\begin{definition}[Eigenvalue observability]
    We say an eigenvalue $\lambda\in \textup{sp}(A)$ is \emph{observable} with respect to sensor $i\in [p]$, if $\textup{Rank}\begin{bsmallmatrix}
        A-\lambda I \\
        C_i
    \end{bsmallmatrix} = n$.
\end{definition}

\begin{definition}($k$-eigenvalue observability) \label{def:eigenvalue_observability}
System~\eqref{eq:system} is \emph{$k$-eigenvalue observable} if each eigenvalue of $A$ is observable with respect to at least $k+1$ sensors.
\end{definition}

\begin{lemma}[Eigenvalue observability and sparse observability,\cite{mao2022computational}] \label{lem:eig and sparse observable}
    If system~\eqref{eq:system} is $k$-eigenvalue observable, then it is $k$-sparse observable. If system~\eqref{eq:system} is $k$-sparse observable and the geometric multiplicity is one for every eigenvalue of the matrix $A$, then the system is $k$-eigenvalue observable.
\end{lemma}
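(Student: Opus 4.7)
The plan is to prove both implications via the Popov--Belevitch--Hautus (PBH) eigenvector test for observability, which states that $(A, C_{\Gamma})$ is observable if and only if for every eigenvalue $\lambda \in \textup{sp}(A)$, $\textup{Rank}\begin{bsmallmatrix} A - \lambda I \\ C_{\Gamma} \end{bsmallmatrix} = n$. The forward direction will be a direct counting argument using the pigeonhole principle, while the converse direction will require contraposition together with the geometric-multiplicity hypothesis to glue individual kernel vectors into a single common kernel vector.

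For the first implication, assume $k$-eigenvalue observability, and fix an arbitrary $\Gamma \in \mathbb{C}_p^{p-k}$; the task is to check the PBH test for $(A, C_{\Gamma})$. For each $\lambda \in \textup{sp}(A)$, let $S(\lambda) \subseteq [p]$ be the set of sensors with respect to which $\lambda$ is observable, so $|S(\lambda)| \geq k+1$ by assumption. Since $\Gamma$ excludes only $k$ sensors from $[p]$, the intersection $\Gamma \cap S(\lambda)$ is nonempty, i.e., there exists $i \in \Gamma$ for which $\textup{Rank}\begin{bsmallmatrix} A - \lambda I \\ C_i \end{bsmallmatrix} = n$. Since $C_i$ is a row block of $C_{\Gamma}$, the full rank condition for $C_\Gamma$ follows. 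Invoking PBH for every $\lambda$ then gives observability of $(A, C_{\Gamma})$.

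For the second implication, I proceed by contraposition: assume the system is not $k$-eigenvalue observable, and use the geometric multiplicity hypothesis to produce a $\Gamma \in \mathbb{C}_p^{p-k}$ such that $(A, C_{\Gamma})$ is not observable. By assumption there is some eigenvalue $\lambda$ observable with respect to at most $k$ sensors; let $T(\lambda) \subseteq [p]$ be this set, with $|T(\lambda)| \leq k$. Choose any $\Gamma \in \mathbb{C}_p^{p-k}$ contained in $[p] \setminus T(\lambda)$, which is possible since $|[p] \setminus T(\lambda)| \geq p-k$. For every $i \in \Gamma$, $\textup{Rank}\begin{bsmallmatrix} A - \lambda I \\ C_i \end{bsmallmatrix} < n$, so there exists a nonzero vector $v_i$ with $(A - \lambda I) v_i = 0$ and $C_i v_i = 0$. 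The geometric multiplicity of $\lambda$ being one means $\ker(A - \lambda I)$ is one-dimensional, so every such $v_i$ is a scalar multiple of a single eigenvector $v$. Hence $C_i v = 0$ for all $i \in \Gamma$, which gives $C_{\Gamma} v = 0$ and $(A - \lambda I) v = 0$ simultaneously. Therefore $\textup{Rank}\begin{bsmallmatrix} A - \lambda I \\ C_{\Gamma} \end{bsmallmatrix} < n$, and by PBH $(A, C_{\Gamma})$ is not observable, contradicting $k$-sparse observability.

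The main obstacle is the converse direction: without the geometric-multiplicity assumption, the kernel vectors $v_i$ associated with different unobserving sensors $i$ could lie in different directions within $\ker(A - \lambda I)$, so one cannot in general extract a common $v$ killing all of $C_{\Gamma}$. The one-dimensional kernel assumption is precisely what makes this gluing step work, and this is the only nontrivial ingredient of the argument; the rest reduces to careful bookkeeping of cardinalities and an appeal to PBH.
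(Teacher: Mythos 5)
Your proof is correct. Note that this paper does not prove the lemma at all---it is imported directly from \cite{mao2022computational}---so there is no in-paper argument to compare against; judged on its own, your PBH-based proof (pigeonhole counting over the at least $k+1$ observing sensors for the forward direction, and the one-dimensional-kernel gluing under the geometric-multiplicity hypothesis for the converse) is the standard argument for this equivalence, both directions are complete, and your closing remark correctly identifies why the converse genuinely fails without the multiplicity assumption: distinct sensors could then be blind to distinct directions inside $\ker(A-\lambda I)$, leaving no common eigenvector annihilated by all of $C_{\Gamma}$.
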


Much like the $k$-sparse observability, $k$-eigenvalue observability is an indicator of sensor redundancy in each generalized eigenspace. The geometric multiplicity assumption  mitigates the NP-hardness associated with general SSR problems, and is reasonable for many applications. To this end, \cite{mao2022computational} proposes an efficient SSR algorithm that avoids the need to consider combinatorial subsets of all sensors under a $2s$-eigenvalue observability assumption. We summarize the results from \cite{mao2022computational} in the following lemma.

\begin{lemma} [\hspace{-0.03mm}\cite{mao2022computational}] \label{lem:computation paper}
The following statements hold. 
    \begin{enumerate}
        \item The system~\eqref{eq:system} is $k$-sparse ($k$-eigenvalue) observable if and only if each subsystem~\eqref{eq:subsystems} is $k$-sparse ($k$-eigenvalue) observable.

         \item For any $x\in \mathbb{R}^n$, let $\mathrm{x}_j := P_j x$. We have the following equivalence:
         $\mathcal{O}_i x = Y_i \Leftrightarrow \mathcal{O}_i^j \mathrm{x}_j = Y_i^j, \forall j\in [r].$ 
         
        \item When the system \eqref{eq:system} is $2s$-sparse observable, both the original SSR problem~\eqref{eq:system_compact} and the sub-SSR problems~\eqref{eq:subsystems_compact} have unique solutions, denoted by $x^*$ and $\{\mathrm{x}_j^*\}_{j\in[r]}$, respectively. These solutions satisfy $x^* = \sum_{j\in [r]} \mathrm{x}_j^*$.

\item If the system~\eqref{eq:system} is $2s$-eigenvalue observable, then the unique solutions $\{\mathrm{x}_j^*\}_{j\in[r]}$ to the sub-SSR problems can be obtained by majority votes.

    \end{enumerate}
\end{lemma}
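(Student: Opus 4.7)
The plan is to establish the four claims in the order given, since each one builds on the previous and on the projection identities in~\eqref{eq:projection}. For Part~(1) I would invoke the Popov-Belevitch-Hautus (PBH) test: $(A,C_\Gamma)$ is observable iff $\mathrm{rank}\begin{bsmallmatrix}A-\lambda I \\ C_\Gamma\end{bsmallmatrix}=n$ for every $\lambda\in\textup{sp}(A)$. Because each generalized eigenspace $V^j$ is $A$-invariant with $\textup{sp}(A|_{V^j})=\{\lambda_j\}$ and $\ker(A-\lambda_j I)\subseteq V^j$, the PBH rank condition at $\lambda_j$ for $(A,C_\Gamma)$ and the corresponding PBH condition for the subsystem pair $(A|_{V^j},C_\Gamma P_j)$ both reduce to injectivity of $C_\Gamma$ on $\ker(A-\lambda_j I)$. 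Hence $k$-sparse observability of the full system is equivalent to $k$-sparse observability of every subsystem; the eigenvalue-observability version follows from the same equivalence applied with $\Gamma=\{i\}$, since $\mathrm{sp}(A|_{V^j})$ is the singleton $\{\lambda_j\}$.

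For Part~(2) I would start from $x=\sum_j P_j x=\sum_j\mathrm{x}_j$ and apply $\mathcal{O}_i$ to obtain $\mathcal{O}_i x=\sum_j \mathcal{O}_i\mathrm{x}_j$. Since $\mathcal{O}_i\mathrm{x}_j\in\mathcal{O}_i(V^j)$, applying the projection $\tilde P_i^j$ extracts exactly the $j$-th summand, yielding $\tilde P_i^j\mathcal{O}_i x=\mathcal{O}_i\mathrm{x}_j=\mathcal{O}_i^j\mathrm{x}_j$; combined with $\tilde P_i^j Y_i=Y_i^j$, this gives the forward implication. The converse follows by summing the subsystem equations over $j$ and using $\sum_j\mathrm{x}_j=x$ together with $\sum_j Y_i^j=Y_i$.

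For Part~(3) I would combine the previous two parts with Lemma~\ref{lem:CDC paper results}(3): $2s$-sparse observability of the full system forces a unique $x^*\in\myset{X}_t^0$, and Part~(1) lifts $2s$-sparse observability to each subsystem, yielding a unique sub-solution $\mathrm{x}_j^*$. Applying Part~(2) with $x=x^*$ shows that $P_j x^*$ solves the $j$-th sub-SSR equations, so uniqueness forces $\mathrm{x}_j^*=P_j x^*$ and hence $x^*=\sum_j P_j x^*=\sum_j\mathrm{x}_j^*$.

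The main obstacle is Part~(4). Fix $j\in[r]$. By $2s$-eigenvalue observability there is a set $\mathcal{I}_j\subseteq[p]$ with $|\mathcal{I}_j|\geq 2s+1$ of sensors that observe $\lambda_j$. Applying the equivalence established in Part~(1) with the single-index choice $\Gamma=\{i\}$, each $i\in\mathcal{I}_j$ already makes $(A|_{V^j},C_i P_j)$ observable, so the single-sensor equation $\mathcal{O}_i^j\mathrm{x}_j=Y_i^j$ has a unique solution in $V^j$. For every honest sensor $i\in\mathcal{I}_j\setminus\Gamma_A$ that unique solution is $\mathrm{x}_j^*$; since $|\Gamma_A|\leq s$, at least $s+1$ of the $|\mathcal{I}_j|$ sensors endorse $\mathrm{x}_j^*$, whereas any alternative candidate can be supported by at most $s$ attacked sensors. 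A strict majority vote over $\mathcal{I}_j$ therefore recovers $\mathrm{x}_j^*$ unambiguously. The delicate piece of bookkeeping is tying the full-system eigenvalue-observability hypothesis to single-sensor subsystem observability, which is exactly where Part~(1) pays off.
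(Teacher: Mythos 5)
Your proof is correct, but there is nothing in the paper to compare it against: the paper does not prove this lemma at all---it is stated purely as a citation of \cite{mao2022computational}, and all four claims are imported wholesale from that reference. Your reconstruction therefore fills a gap the paper deliberately leaves open, and the route you take---reducing both full-system and subsystem observability to injectivity of $C_\Gamma$ on $\ker(A-\lambda_j I)$ via the PBH test, extracting the per-subspace equations with the projections $\tilde P_i^j$ in Part (2), combining Parts (1)--(2) with Lemma~\ref{lem:CDC paper results}(3) for Part (3), and the counting argument for Part (4) (at least $s+1$ honest voters among the $2s+1$ sensors observing $\lambda_j$, versus at most $s$ attacked sensors backing any fake candidate)---is essentially the argument structure of the cited reference, so it should be viewed as the expected proof rather than a novel alternative. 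Two small points are worth tightening. In Part (2) you implicitly assume $Y_i\in\mathcal{O}_i(\mathbb{R}^n)$ so that the decomposition $Y_i=\sum_{j}Y_i^j$ and the projections $\tilde P_i^j Y_i$ are defined; if $Y_i\notin\mathcal{O}_i(\mathbb{R}^n)$ the equation $\mathcal{O}_i x=Y_i$ has no solution and the equivalence is vacuous, but this case should be acknowledged. In Part (4) you should note that an attacked sensor's projected data $Y_i^j$ may fail to lie in $\textup{Range}(\mathcal{O}_i^j)$, in which case that sensor casts no vote in subspace $j$; this does not damage the count, since the at least $s+1$ honest votes for $\mathrm{x}_j^*$ still strictly exceed the at most $s$ votes available to any alternative candidate.
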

Along with these results, Algorithm~\ref{alg:SSR_by_decomposition} from~\cite{mao2022computational} is proposed to solve the SSR problem for $2s$-eigenvalue observable systems. 
The algorithm starts by collecting the reconstructed states $\{ \mathrm{x}_{j,i} \}$ on each subspace $V^j$ from each sensor $i$ for which the eigenvalue $\lambda_j$ is observable (Algorithm \ref{alg:pre-processing}). Then, it finds the substate $\mathrm{x}_j$, for each subsystem~$j$, that receives the most votes. The plausible initial state is obtained as the summation of $\mathrm{x}_j$ over all subspaces. 
The correctness of this algorithm can be easily established from Lemma~\ref{lem:computation paper} Claims 3 and 4. Unfortunately, this algorithm is not applicable under \emph{severe sensor attack} where  
the $2s$-sparse observability property does not hold, let alone the more stringent $2s$-eigenvalue observability condition.

\begin{algorithm}[h]
\caption{Solve SSR problem by decomposition for  $2s$-eigenvalue observable systems (Proposed in \cite{mao2022computational})}
\label{alg:SSR_by_decomposition}
\begin{algorithmic}[1]
\Require{system dynamics \eqref{eq:system}, input-output data $\mathcal{D}_t$, and the maximal number of attacks $s$}
\Ensure{ the set of plausible initial states  $\myset{X}_t^0$}
\State $\{\mathrm{x}_{j,i}\} \leftarrow$ execute \texttt{Pre-processing} (Alg.~\ref{alg:pre-processing})
\State for each $j\in [r]$, select $\mathrm{x}_j$ with the most votes among all $\mathrm{x}_{j,i}$
\State \Return $\myset{X}_t^0 = \{x\}$ with $x = \mathrm{x}_1 + \mathrm{x}_2 +\ldots + \mathrm{x}_r$
\end{algorithmic}
\end{algorithm}

\begin{algorithm}[h]
\caption{Pre-processing}
\label{alg:pre-processing}
\begin{algorithmic}[1]
\Require{system dynamics in \eqref{eq:system}, input-output data $\mathcal{D}_t$, and the maximal number of attacks $s$}
\For{each $(j,i) \in [r]\times [p]$}
\If{$\lambda_j$ is observable w.r.t. sensor $i$ }
\State  construct projection matrices $P_j, \tilde{P}_i^j$
\State construct sub-SSR problem \eqref{eq:subsystems_compact}
\State store $\mathrm{x}_{j,i}$ $\leftarrow$  solve $\mathrm{x}_j$ from \eqref{eq:subsystems_compact} assuming $E_i^j = 0$
\EndIf
\EndFor
\State \Return $\{ \mathrm{x}_{j,i}\}$, a collection of subspace-sensor indexed states $\mathrm{x}_{j,i}$ 
\end{algorithmic}
\end{algorithm}

\subsection{Proposed extension to address severe sensor attacks}
In what follows, we propose an extension to Algorithm~\ref{alg:SSR_by_decomposition} for the severe sensor attack scenario, where the system is only $q$-eigenvalue observable with $s\leq q \leq 2s$.
In this setting, there can be more attacked sensors than the intact ones in some subspace $V^j$. Thus,
the projected state $\mathrm{x}_j$ with the most votes from Step 2 of Algorithm~\ref{alg:SSR_by_decomposition} may be a fake state set by the attacker, making the algorithm incorrect.

We propose Algorithm~\ref{alg:SSR_by_decomposition_q_eigen} to account for the severe sensor attack scenarios. The algorithm starts by obtaining a collection of subspace-sensor indexed states $\{ \mathrm{x}_{j,i} \}$ similar to Algorithm~\ref{alg:SSR_by_decomposition}. We consider each $\mathrm{x}_j$ 
that receives votes from at least $q+1-s$ sensors as a plausible state for the subsystem~$j$ and collect them in the set $X_j$. Then, for each  state $\mathrm{x}_j$, we also record the set of sensors $\mathcal{I}_j $ that disagrees with the state $\mathrm{x}_j$, storing them $(\mathrm{x}_j, \mathcal{I}_j)$ as pairs in set $\myset{S}_j$ (Algorithm~\ref{alg:threshold_voting}). Finally, we enumerate all possible combinations in $\mathcal{S}_1\times \ldots \times \mathcal{S}_r$, and if the condition $\cup_{j\in[r]}\myset{I}_j\leq s$ holds for that combination, we sum up all corresponding $\mathrm{x}_j, j\in [r]$ and store it in the set $\myset{M}$. The rationale behind this cardinality condition is that, if more than $s$ sensors disagree with a combination, then there must be at least one intact sensor among the disagreeing sensors. A formal proof is given below.

\begin{proposition} \label{prop:algorithm_ssr_q_eigen}
    If the system~\eqref{eq:system} is $q$-eigenvalue observable with $s\leq q \leq  2s$, then for any input-output data $\mathcal{D}_t$ collected from this system, Algorithm~\ref{alg:SSR_by_decomposition_q_eigen} returns the exact set of plausible initial states as in \eqref{eq:X0set}, i.e, $\myset{M} = \myset{X}_t^0$. 
\end{proposition}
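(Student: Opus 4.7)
The plan is to establish the set equality $\mathcal{M} = \mathcal{X}_t^0$ by proving both inclusions, using Lemma~\ref{lem:computation paper} Claim 2 (the equivalence $\mathcal{O}_i x = Y_i \Leftrightarrow \mathcal{O}_i^j \mathrm{x}_j = Y_i^j$ for all $j$, with $\mathrm{x}_j = P_j x$) as the bridge between the aggregated SSR problem and the decomposed sub-SSR problems, together with the characterization of plausible initial states in Lemma~\ref{lem:X0}. Throughout, I would read $\mathcal{I}_j$ as the set of sensors whose subsystem data is inconsistent with the candidate substate, i.e.\ $\mathcal{I}_j = \{ i \in [p] : \mathcal{O}_i^j \mathrm{x}_j \neq Y_i^j \}$, which specializes to disagreement with a voting sensor whenever $\lambda_j$ is observable with respect to $i$.

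For the inclusion $\mathcal{M} \subseteq \mathcal{X}_t^0$, I first take any $x = \sum_{j \in [r]} \mathrm{x}_j$ produced by Algorithm~\ref{alg:SSR_by_decomposition_q_eigen} from a combination $(\mathrm{x}_j, \mathcal{I}_j)_{j \in [r]} \in \mathcal{S}_1 \times \cdots \times \mathcal{S}_r$ satisfying $\big| \bigcup_{j} \mathcal{I}_j \big| \leq s$. By the definition of $\mathcal{I}_j$, any sensor $i \in \Gamma := [p] \setminus \bigcup_{j} \mathcal{I}_j$ satisfies $\mathcal{O}_i^j \mathrm{x}_j = Y_i^j$ for every $j \in [r]$, and Lemma~\ref{lem:computation paper} Claim 2 upgrades this to $\mathcal{O}_i x = Y_i$. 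Since $|\Gamma| \geq p - s$, I can select any $(p-s)$-subset $\Gamma' \subseteq \Gamma$ to witness $x \in \mathcal{X}_t^{0,\Gamma'} \subseteq \mathcal{X}_t^0$ via Lemma~\ref{lem:X0}.

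For the reverse inclusion $\mathcal{X}_t^0 \subseteq \mathcal{M}$, I take an arbitrary plausible initial state $x^*$, with an associated index set $\Gamma^* \in \mathbb{C}_p^{p-s}$ such that $\mathcal{O}_i x^* = Y_i$ for all $i \in \Gamma^*$, and set $\mathrm{x}_j^* := P_j x^*$. Applying Lemma~\ref{lem:computation paper} Claim 2 in the other direction yields $\mathcal{O}_i^j \mathrm{x}_j^* = Y_i^j$ for every $i \in \Gamma^*$ and every $j$, which forces $\mathcal{I}_j^* \subseteq [p] \setminus \Gamma^*$ and hence $\big| \bigcup_j \mathcal{I}_j^* \big| \leq s$. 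It remains to check that each $\mathrm{x}_j^*$ clears the $q+1-s$ voting threshold and thereby lies in $X_j$: by $q$-eigenvalue observability, $\lambda_j$ is observable with respect to at least $q+1$ sensors, and at most $s$ of these can lie outside $\Gamma^*$, so at least $q+1-s$ of them lie in $\Gamma^*$; for each such $i$, uniqueness of the pre-processing solve gives $\mathrm{x}_{j,i} = \mathrm{x}_j^*$. Together, $(\mathrm{x}_j^*, \mathcal{I}_j^*) \in \mathcal{S}_j$ for every $j$ and the cardinality condition both hold, so $x^* = \sum_j \mathrm{x}_j^* \in \mathcal{M}$.

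The main obstacle I anticipate is fixing the precise definition of $\mathcal{I}_j$ so that both inclusions go through cleanly---in particular, accounting for sensors where $\lambda_j$ is unobservable, which cast no vote but whose data consistency $\mathcal{O}_i^j \mathrm{x}_j = Y_i^j$ must still be checked so that Lemma~\ref{lem:computation paper} Claim 2 can be invoked sensor-by-sensor. Once this is handled, the threshold $q+1-s$ is the unique choice that makes the algorithm simultaneously sound (high enough to be realistically clearable) and complete (low enough that every genuine substate $\mathrm{x}_j^* = P_j x^*$ is guaranteed to clear it under the $s$-attack budget), so the two properties combine into the equality $\mathcal{M} = \mathcal{X}_t^0$.
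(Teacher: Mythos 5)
Your proof is correct and takes essentially the same route as the paper's: both inclusions are established via Lemma~\ref{lem:computation paper} Claim 2 as the bridge between the full SSR problem and the sub-SSR problems, with the same counting argument that at least $q+1-s$ of the $q+1$ sensors observing $\lambda_j$ must lie in the attack-free index set $\Gamma$, and the same cardinality bound $\vert\cup_j \mathcal{I}_j\vert \leq s$. Your two added details---invoking uniqueness of the pre-processing solve to conclude $\mathrm{x}_{j,i} = \mathrm{x}_j^*$, and reading $\mathcal{I}_j$ as consistency over all sensors rather than only voting ones---are correct refinements of steps the paper leaves implicit.
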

\begin{proof}
    $\myset{M} \subseteq \myset{X}_t^0$: Given $x\in \myset{M}$, there exists $\{(\mathrm{x}_j,\mathcal{I}_j)\}_{j\in [r]}$ such that $x=\sum_{j\in[r]}\mathrm{x}_j$ with $\mathrm{x}_j\in V_j$. We derive $P_jx = P_j\sum_{j\in[r]}\mathrm{x}_j = \mathrm{x}_j$ using the properties of the projection matrix.
    Therefore, from the fact that each $\mathrm{x}_j\in X_j, j\in [r],$ is a solution to $\mathcal{O}_i^j \mathrm{x}_j = Y_i^j$, for $i\in [p]\setminus \cup_{j\in [r]} \mathcal{I}_j $, we can apply Lemma~\ref{lem:computation paper} Claim 2 to deduce  $\mathcal{O}_i x = Y_i$ for each sensor $i\in[p]\setminus \cup_{j\in [r]} \mathcal{I}_j$. 
    Since there are only $|\cup_{j\in [r]} \mathcal{I}_j| \leq s $ sensors disagreeing with the combination, there exists a subset  $\Gamma\subseteq [p]\setminus \cup_{j\in [r]} \mathcal{I}_j$ with cardinality $p-s$. Together with $\mathcal{O}_i x = Y_i$ for each $i\in\Gamma$, $x$ is a member of the set $\myset{X}^0_t$.

    $\myset{X}_t^0\subseteq \myset{M}$: Given $x\in\myset{X}_t^0$, there exists an index set $\Gamma$ for $|\Gamma| = p-s$ sensors such that $O_ix=Y_i$ for $i\in\Gamma$. Recall that by definition of $q$-eigenvalue observability, there are at least $q+1$ sensors in each subspace for which the eigenvalue is observable. We will show that, among those $q+1$ sensors, at least $q+1-s$ sensors are in the set $\Gamma$. If  this is not true, it implies that there exist more than $s$ sensors not in the set $\Gamma$. This yields a contradiction since $ |[p]\setminus \Gamma|  = s$. As such, 
    according to Lemma~\ref{lem:computation paper} Claim 2, for each $j\in[r]$, $\mathrm{x}_j= P_jx$ satisfies $O_i^j\mathrm{x}_j=Y_i^j$ for at least $q+1-s$ sensors.
    Therefore, each $P_jx$ is a plausible state in the $j$th subspace. Then $|\cup_{j\in[r]}I_j|\leq s$ holds because there are only $s$ sensors not from the set $\Gamma$. As such, $\sum_{j\in[r]}P_jx$ is a plausible state belonging to the set $\myset{M}$, concluding the proof. 
\end{proof}

\begin{algorithm}[h]
\caption{Solve SSR problem by decomposition for $q$-eigenvalue observable systems, $s\leq q \leq  2s$}
\label{alg:SSR_by_decomposition_q_eigen}
\begin{algorithmic}[1]
\Require{system dynamics in \eqref{eq:system}, input-output data $\mathcal{D}_t$, and the maximal number of attacks $s$}
\Ensure{ the set of plausible initial states  $\myset{X}_t^0$}
\State $\{\mathrm{x}_{j,i}\} \leftarrow$ execute \texttt{Pre-processing} (Alg.~\ref{alg:pre-processing})
\State $\myset{S}_j, j\in [r] \leftarrow $ execute  \texttt{Threshold_voting} (Alg.~\ref{alg:threshold_voting}) 
\State Initialize $ \myset{M}  = \{ \}$
\For{each $(\mathrm{x}_1, \mathcal{I}_1,\ldots, \mathrm{x}_r, \mathcal{I}_r) \in \myset{S}_1 \times \ldots \times \myset{S}_r $ }
\If{$| \cup_j \mathcal{I}_j|\leq s$}
\State  $\myset{M} \leftarrow \myset{M}  \cup\{x\}$ with $x = \mathrm{x}_1 + \mathrm{x}_2 +\ldots + \mathrm{x}_r$
\EndIf
\EndFor

\State \Return $\myset{M} $

\end{algorithmic}
\end{algorithm}

\begin{algorithm}[h]
\caption{Threshold voting}
\label{alg:threshold_voting}
\begin{algorithmic}[1]
\Require{ a collection of indexed substates $\{ \mathrm{x}_{j,i}\}$}
\For{each subsystem $j\in [r]$}
\State  $X_j \leftarrow$ collect $\mathrm{x}_{j,i}$ with at least $q+1-s$ votes.  
\State initialize $\myset{S}_j = \{ \}$
\For{each $\mathrm{x}_j$ in $X_j$}
\State $\mathcal{I}_j \leftarrow $ collect sensors that are not consistent with the substate $\mathrm{x}_j$, i.e., $\mathcal{I}_j = \{ i\in [p]: \mathcal{O}_i^{j} \mathrm{x}_j \neq Y_i^j \} $. 
\State $\myset{S}_j \leftarrow \myset{S}_j \cup \{ (\mathrm{x}_j,\myset{I}_{j}) \} $ 
\EndFor
\EndFor
\State \Return $\myset{S}_j$ for $j\in [r]$ 
\end{algorithmic} 
\end{algorithm}

Algorithm~\ref{alg:SSR_by_decomposition_q_eigen} generalizes Algorithm~\ref{alg:SSR_by_decomposition} to address severe sensor attacks, both of which leverages the concept of eigenvalue observability to efficiently solve the sub-SSR problems. However,
from  Steps $4$-$8$ of Algorithm~\ref{alg:SSR_by_decomposition_q_eigen}, it is evident that in the presence of severe sensor attacks ($q<2s$), the decomposition-based SSR algorithm 
still faces computational burden that scales exponentially with respect to the number of eigenspaces~$r$. When incorporated into two-stage approaches--reconstructing the states first, and then computing safe control inputs--state reconstruction will still be a bottleneck and hinder real-time safe control computation. 
In the following section we propose a computationally efficient, albeit conservative, approach to safe control computation.

\section{Computationally efficient safe control} \label{sec:proposed design}
In this section, we propose a computationally efficient safe control design without explicitly reconstructing the set of plausible states. Our approach directly computes a sufficient safe condition based on the control barrier function framework. The conservatism of our approach as well as mitigation methods are also discussed.

We first rewrite the CBF condition in \eqref{eq:qp_control_constraint} by using \eqref{eq:X_t} as
\begin{equation}
\begin{aligned}
    HB u  & \geq  H\left((1-\gamma)I -A\right)(A^t x  + A^{t-1}Bu(0)   \\
      & \hspace{5mm}  +  ... + Bu(t-1)) -\gamma g  \text{ for all }  x\in \myset{X}_{t}^{0}
\end{aligned}
\end{equation}
The above inequality is equivalent to 
\begin{equation}  \label{eq:equivalent_CBF_condition}
    HB u  \geq M(Y) + Q(U),
\end{equation}
where, as shorthand notations, we introduce the functions:
\begin{equation*}
    \begin{aligned}
     &  [M(Y)]_k := \max_{x \in \myset{X}_t^0} [Kx]_k, \textup{ for } k\in [l], \\ 
    & Q(U) := K_0 ( A^{t-1}Bu(0) + ... + Bu(t-1)) -\gamma g  
    \end{aligned}
\end{equation*}
with $K_0 :=H\left((1-\gamma)I -A\right)$ and $K  :=  K_0 A^t$. Here, we make it clear that, $M(Y)\in \mathbb{R}^l$ is a vector defined entry-wise, with $l\in \mathbb{N}$ being the total number state constraints from \eqref{eq:safety_requirement}. The values of $\max(\cdot)$ function are well-defined in \eqref{eq:equivalent_CBF_condition} since the set $\myset{X}_t^0$ is finite, cf. Lemma~\ref{lem:CDC paper results}. We also make explicit the dependence on $Y$ that characterizes the set $\myset{X}_t^0$ in \eqref{eq:X0set}. Given that $Q(U)$ can be computed directly from the input data, the main obstacle in expressing the CBF condition is the computation of $M(Y)$.

We propose Algorithm~\ref{alg:M_upper_bound} to efficiently compute an upper bound on $ M(Y)$ given input-output data $\mathcal{D}_t$. The algorithm first computes all the subspace-sensor indexed states $\{  \mathrm{x}_{j,i} \}$, a step similar to previous algorithms. Then, for each subspace $V^j$, we collect all states $\mathrm{x}_j$ that have at least $q+1-s$ votes into a set $X_j$, and compute entry-wise the maximal values of $Kv_j$ for $v_j \in X_j$. The upper bound on $M(Y)$ is then obtained by summing up those maximal values from different subspaces. Here, instead of ruling out combinations of $\mathrm{x}_j$ that are not plausible, we directly use the worst combination of $\mathrm{x}_j$, even if it may not be plausible, in order to avoid the exponential computation issue. A formal guarantee on this algorithm is established below.

\begin{algorithm}[h]
\caption{Compute an upper bound on $M(Y)$ for $q$-eigenvalue observable systems, $s\leq q < 2s$}
\label{alg:M_upper_bound}
\begin{algorithmic}[1]
\Require{system dynamics in \eqref{eq:system}, input-output data $\mathcal{D}_t$, and the maximal number of attacks $s$}
\Ensure{$\overline{M}$: an upper bound of $  M(Y_t) $}
\State $\{\mathrm{x}_{j,i}\} \leftarrow$ execute \texttt{Pre-processing} (Alg.~\ref{alg:pre-processing})
\For{each subsystem $j\in [r]$}
\State  $X_j \leftarrow$ collect $\mathrm{x}_{j,i}$ with at least $q+1-s$ votes.
\State for $k\in [l]$, compute $ [M_j]_k  \leftarrow  \max_{\mathrm{x}_j  \in X_j } [K\mathrm{x}_j]_k $
\EndFor
\State \Return $\overline{M} = \sum_{j} M_j$

\end{algorithmic}
\end{algorithm}

\begin{proposition} \label{prop:M_upper}
    Consider a $q$-eigenvalue observable system~\eqref{eq:system} with $s\leq q \leq  2s$. Given input-output data $\mathcal{D}_t$ collected from this system, let $\overline{M} $ be the output from Algorithm~\ref{alg:M_upper_bound}. Then $\overline{M} \geq M(Y)$ holds.
\end{proposition}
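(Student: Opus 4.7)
The plan is to exploit the linear decomposition $x = \sum_{j\in[r]} P_j x$ guaranteed by the projection matrices, and show that for every plausible initial state $x \in \myset{X}_t^0$, each component $P_j x$ actually lies in the set $X_j$ computed by Algorithm~\ref{alg:M_upper_bound}. Once this is established, linearity of $K$ and entry-wise maximization will deliver the bound immediately.

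More concretely, I would first fix an arbitrary $x \in \myset{X}_t^0$ and an arbitrary constraint index $k \in [l]$, and write $\mathrm{x}_j := P_j x$ so that $x = \sum_{j \in [r]} \mathrm{x}_j$ by~\eqref{eq:projection}. By linearity of $K$,
\begin{equation*}
    [Kx]_k \;=\; \sum_{j\in[r]} [K \mathrm{x}_j]_k.
\end{equation*}
The key step is to certify $\mathrm{x}_j \in X_j$ for every $j$. This is essentially the argument already used in the second half of the proof of Proposition~\ref{prop:algorithm_ssr_q_eigen}: since $x \in \myset{X}_t^0$, there is some $\Gamma \subseteq [p]$ with $|\Gamma| = p-s$ and $\mathcal{O}_i x = Y_i$ for all $i \in \Gamma$. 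By Lemma~\ref{lem:computation paper} Claim 2, this is equivalent to $\mathcal{O}_i^j \mathrm{x}_j = Y_i^j$ for all $i \in \Gamma$ and all $j \in [r]$. By $q$-eigenvalue observability, each eigenvalue $\lambda_j$ is observable with respect to at least $q+1$ sensors, and at least $q+1-s$ of them must lie in $\Gamma$ (otherwise more than $s$ sensors would be outside $\Gamma$, contradicting $|[p]\setminus\Gamma| = s$). For each such sensor $i$, the sub-SSR solver in Algorithm~\ref{alg:pre-processing} returns precisely $\mathrm{x}_{j,i} = \mathrm{x}_j$ (uniqueness in the sub-SSR problem follows from eigenvalue observability), so $\mathrm{x}_j$ receives at least $q+1-s$ votes and therefore belongs to $X_j$.

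With $\mathrm{x}_j \in X_j$ in hand, the definition of $M_j$ in Step~4 of Algorithm~\ref{alg:M_upper_bound} gives
\begin{equation*}
    [K \mathrm{x}_j]_k \;\leq\; \max_{\mathrm{z} \in X_j} [K \mathrm{z}]_k \;=\; [M_j]_k,
\end{equation*}
and summing over $j$ yields $[Kx]_k \leq \sum_{j} [M_j]_k = [\overline{M}]_k$. Taking the maximum over $x \in \myset{X}_t^0$ on the left-hand side produces $[M(Y)]_k \leq [\overline{M}]_k$, and since this holds for every $k \in [l]$, the entry-wise inequality $\overline{M} \geq M(Y)$ follows.

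The main obstacle is the certification that $P_j x \in X_j$; everything else is linearity and the monotonicity of maximization. The counting argument (at least $q+1-s$ of the $q+1$ observable-for-$\lambda_j$ sensors lie in $\Gamma$) and the invocation of Lemma~\ref{lem:computation paper} Claim 2 to pass from $\mathcal{O}_i x = Y_i$ to $\mathcal{O}_i^j P_j x = Y_i^j$ are exactly the ingredients needed, and both are already available from the preceding development.
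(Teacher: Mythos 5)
Your proof is correct, and its substance matches the paper's; the difference is one of organization rather than of mathematical content. The paper never proves Proposition~\ref{prop:M_upper} in isolation: it obtains the inequality as part of the chain $\overline{M} \geq \overline{M}_\Lambda \geq M(Y)$ in the proof of Proposition~\ref{prop:M_upper_less_conservative}, where the crucial decomposition step is imported from Proposition~\ref{prop:algorithm_ssr_q_eigen} --- the maximizer $x^*$ of $[Kx]_k$ over $\myset{X}_t^0 = \myset{M}$ is, by the construction of $\myset{M}$ in Algorithm~\ref{alg:SSR_by_decomposition_q_eigen}, a sum $\sum_{j\in[r]}\mathrm{x}_j^*$ with each $\mathrm{x}_j^* \in X_j$, after which entry-wise maximization over each $X_j$ gives the bound. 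You bypass $\myset{M}$ and the intermediate bound $\overline{M}_\Lambda$ entirely and certify directly that for every $x \in \myset{X}_t^0$ each projection $P_j x$ lies in $X_j$, by combining Lemma~\ref{lem:computation paper} Claim 2 with the sensor-counting argument and the uniqueness of the single-sensor sub-SSR solution on $V^j$ under eigenvalue observability --- exactly the ingredients of the $\myset{X}_t^0 \subseteq \myset{M}$ half of the proof of Proposition~\ref{prop:algorithm_ssr_q_eigen}, as you acknowledge. What your route buys: a self-contained argument showing that Proposition~\ref{prop:M_upper} needs only the inclusion $\myset{X}_t^0 \subseteq \{\sum_{j\in[r]}\mathrm{x}_j : \mathrm{x}_j \in X_j\}$, not the full set equality of Proposition~\ref{prop:algorithm_ssr_q_eigen}, nor Algorithm~\ref{alg:SSR_by_decomposition_q_eigen} at all; you are also more explicit than the paper about why $\mathrm{x}_{j,i} = P_j x$ for the intact voting sensors (injectivity of $\mathcal{O}_i^j$ restricted to $V^j$). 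What the paper's route buys: reuse of already-established results and a single argument that simultaneously yields the tighter bound $\overline{M}_\Lambda$ needed for Algorithm~\ref{alg:M_upper_bound_less_conservative}.
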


Proposition \ref{prop:M_upper} will be proven later as part of Proposition~\ref{prop:M_upper_less_conservative}. Based on Proposition \ref{prop:M_upper}, we obtain a conservative CBF condition for the system input $u$ as:
\begin{equation} \label{eq:conservative_CBF_condition}
    HBu \geq \overline{M}(Y) + Q(U).
\end{equation}
Here we make explicit the dependency of the output of Algorithm~\ref{alg:M_upper_bound} on the input-output data. We also note importantly that, unlike previous algorithms, $\overline{M}(Y)$ does not require an enumeration of plausible states. At the same time, the satisfaction of the constraint~\eqref{eq:conservative_CBF_condition} leads to safety.

\begin{thm}\label{thm:safety}
    Consider a $q$-eigenvalue observable system in \eqref{eq:system} with $s \leq q \leq 2s$. Any input sequence $\{u(\tau)\}_{\tau\geq n}$ fulfilling the constraint~\eqref{eq:conservative_CBF_condition} for all $\tau\geq n$     
    produces a state trajectory $\tau\mapsto x(\tau)$ with following properties: 
       \begin{equation} \label{eq:safety_guarantee}
           \begin{aligned}
               & x(n-1)\in \mathcal{C} \implies  x(\tau)\in \mathcal{C},~\forall \tau\geq n, \\
               & x(n-1)\notin \mathcal{C} \implies  \lim_{\tau \to \infty} \| x(\tau) \|_{\mathcal{C}} = 0.
           \end{aligned}
       \end{equation}
\end{thm}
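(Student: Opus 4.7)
The plan is to reduce the claim to the standard discrete-time CBF argument by first showing that any input satisfying the conservative condition~\eqref{eq:conservative_CBF_condition} also satisfies the original CBF inequality at the true (unknown) state. Since the true initial state belongs to $\myset{X}_t^0$, the definition $[M(Y)]_k = \max_{x\in\myset{X}_t^0}[Kx]_k$ together with Proposition~\ref{prop:M_upper} yields $\overline{M}(Y) \geq M(Y) \geq Kx(0)$ entrywise, where $x(0)$ denotes the true initial state. Substituting this bound into~\eqref{eq:conservative_CBF_condition} and reversing the algebraic manipulations that produced~\eqref{eq:equivalent_CBF_condition} recovers the pointwise CBF inequality evaluated at the true state,
\begin{equation*}
H\bigl(Ax(\tau)+Bu(\tau)\bigr)+g \;\geq\; (1-\gamma)\bigl(Hx(\tau)+g\bigr),
\end{equation*}
or equivalently $h(x(\tau+1)) \geq (1-\gamma)\,h(x(\tau))$ with $h(x) = Hx+g$.

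Given this pointwise CBF inequality, I would then apply a componentwise discrete-time CBF induction. Iterating yields, for every $k\in[l]$,
\begin{equation*}
h_k(x(\tau)) \;\geq\; (1-\gamma)^{\tau-\tau_0}\,h_k(x(\tau_0)),
\end{equation*}
where $\tau_0$ is the starting index at which the CBF recursion is first active. For the first claim, if $h_k(x(\tau_0)) \geq 0$ for every $k$---i.e., $x(\tau_0)\in\myset{C}$---then the factor $(1-\gamma)^{\tau-\tau_0} \geq 0$ preserves nonnegativity and gives $x(\tau)\in\myset{C}$ for all subsequent $\tau$. For the second claim, if $h_k(x(\tau_0))<0$ for some $k$, the same inequality implies
\begin{equation*}
\bigl(-h_k(x(\tau))\bigr)_+ \;\leq\; (1-\gamma)^{\tau-\tau_0}\,|h_k(x(\tau_0))|,
\end{equation*}
so the barrier violations decay geometrically as $\tau\to\infty$ whenever $\gamma\in(0,1]$.

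To translate this componentwise decay into convergence in distance, $\|x(\tau)\|_{\myset{C}} \to 0$, I would invoke a Hoffman-type bound for the polytopic safe set $\myset{C}$: there exists $\kappa>0$ such that $\|x\|_{\myset{C}} \leq \kappa\,\|(-h(x))_+\|$ for every $x\in\mathbb{R}^n$. Combined with the geometric decay above, this delivers the desired asymptotic convergence. I expect the main subtlety to be a careful audit of the time indexing---identifying $\tau_0$ with $n-1$ as written in~\eqref{eq:safety_guarantee}, and in particular tracking how the first filtered input $u(n)$ couples the CBF recursion to the given initial state $x(n-1)$---while the rest of the argument is a routine unwinding of the standard discrete-time CBF induction together with the Hoffman inequality, invoked only through Proposition~\ref{prop:M_upper}.
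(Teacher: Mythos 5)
Your proposal is correct and follows essentially the same route as the paper: the paper's entire proof consists of noting that Proposition~\ref{prop:M_upper} makes any input satisfying \eqref{eq:conservative_CBF_condition} also satisfy the CBF condition \eqref{eq:qp_control_constraint}, and then quoting the two implications in \eqref{eq:safety_guarantee} from discrete-time CBF theory \cite{agrawal2017discrete,Ames2017}. The only difference is that you inline what the paper delegates to citations---specializing to the true state (which lies in $\myset{X}_t^0$), the componentwise geometric-decay induction, and a Hoffman-type bound converting the decay of constraint violations into $\| x(\tau)\|_{\mathcal{C}} \to 0$---which is a sound, more self-contained rendering of the same argument.
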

\begin{proof}
 Proposition~\ref{prop:M_upper} shows that any input $u$ fulfilling the condition in \eqref{eq:conservative_CBF_condition} also satisfies the CBF condition in \eqref{eq:qp_control_constraint}. The two implications follow from CBF theory \cite{agrawal2017discrete,Ames2017}.
\end{proof}

Theorem~\ref{thm:safety}, address Problem \ref{problem:safety} by dictating a condition on the feedback controls that guarantees safety. To address Problem~\ref{problem:min_invasive}, we design a safety filter that computes safeguarding control inputs via quadratic programming:
\begin{equation} \label{eq:conservative_controller}
     u(t)  = \argmin_u \| u - u_{\textup{nom}}(t) \|  \quad   \text{s. t. }    \eqref{eq:conservative_CBF_condition} \textup{ holds.}
\end{equation}
Compared to the CBF condition~\eqref{eq:qp_control_constraint}, which is based on the exact set of plausible states, the  constraint~\eqref{eq:conservative_CBF_condition} enforces a conservative safeguarding control input that deviates further from the nominal control input $u_{\textup{nom}}$. This represents a trade-off between optimality and computational efficiency in these two approaches. Next, we present an algorithm that strikes a balance between the two.

\begin{remark}
    Verifying the feasibility of the CBF conditions in \eqref{eq:equivalent_CBF_condition} and \eqref{eq:conservative_CBF_condition} can be challenging. For the CBF condition in \eqref{eq:equivalent_CBF_condition}, a thorough analysis on its feasibility was provided in our previous work \cite{tan2024safety}. The analysis therein also applies to the conservative CBF condition \eqref{eq:conservative_CBF_condition}.
\end{remark}

\subsection{Improving optimality via partial subspace combinations}

We propose Algorithm~\ref{alg:M_upper_bound_less_conservative} to calculate a tighter bound on~$M(Y)$. The algorithm considers an index subset $\Lambda\subseteq[r]$ for the generalized eigenspaces (Step 2) and aims to rule out some implausible combinations $\{\mathrm{x}_j\}_{j\in\Lambda}$ that have $\vert\cup_{j\in\Lambda}\vert>s$. In this case, the total number of combinations is limited to a smaller size. At the same time, the bound $\overline{M}_\Lambda(Y)$, i.e., the output of Algorithm~\ref{alg:M_upper_bound_less_conservative}, better approximates $M(Y)$ than $\overline{M}(Y)$ does, as guaranteed in the following result.

\begin{proposition}
\label{prop:M_upper_less_conservative}
Consider a $q$-eigenvalue observable system~\eqref{eq:system} with $s\leq q \leq  2s$. Given input-output data $\mathcal{D}_t$ collected from this system, let $\overline{M}$ and $\overline{M}_\Lambda$ be the outputs from Algorithm~\ref{alg:M_upper_bound} and Algorithm~\ref{alg:M_upper_bound_less_conservative}, respectively. Then $\overline{M} \geq \overline{M}_\Lambda\geq M(Y)$ holds.
\end{proposition}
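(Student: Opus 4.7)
The plan is to exploit linearity of $K$ together with the subspace decomposition $x=\sum_j P_jx$, then compare the two upper bounds componentwise. First I would make explicit the expected output of Algorithm~\ref{alg:M_upper_bound_less_conservative}, namely an expression of the form
\[
[\overline{M}_\Lambda]_k \;=\; \max_{\substack{(\mathrm{x}_j,\mathcal{I}_j)\in\myset{S}_j,\,j\in\Lambda \\ |\cup_{j\in\Lambda}\mathcal{I}_j|\leq s}} \sum_{j\in\Lambda}[K\mathrm{x}_j]_k \;+\; \sum_{j\in[r]\setminus\Lambda}\max_{\mathrm{x}_j\in X_j}[K\mathrm{x}_j]_k,
\]
which collapses to $\overline{M}$ from Algorithm~\ref{alg:M_upper_bound} when $\Lambda=\varnothing$ and to the exact $M(Y)$ when $\Lambda=[r]$. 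The two inequalities are then handled separately.

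For $\overline{M}_\Lambda\geq M(Y)$, I would take any $x\in\myset{X}_t^0$ attaining $[M(Y)]_k$ and exhibit its subspace components $\mathrm{x}_j:=P_jx$ as an admissible choice inside the constrained maximum. By definition of $\myset{X}_t^0$, there is an index set $\Gamma\subseteq[p]$ with $|\Gamma|=p-s$ such that $\mathcal{O}_ix=Y_i$ for $i\in\Gamma$. Lemma~\ref{lem:computation paper} Claim 2 then yields $\mathcal{O}_i^j\mathrm{x}_j=Y_i^j$ for every $j\in[r]$ and every $i\in\Gamma$. Reusing the counting argument from the proof of Proposition~\ref{prop:algorithm_ssr_q_eigen}, at least $q+1-s$ of the $q+1$ sensors observing $\lambda_j$ lie in $\Gamma$, so each $\mathrm{x}_j$ sits in $X_j$ with disagreement set $\mathcal{I}_j\subseteq[p]\setminus\Gamma$. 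Hence $\cup_{j\in\Lambda}\mathcal{I}_j\subseteq[p]\setminus\Gamma$ has cardinality at most $s$, making the combination feasible. Linearity gives $[Kx]_k=\sum_{j\in[r]}[K\mathrm{x}_j]_k$, and bounding the $j\notin\Lambda$ terms by the unconstrained maxima finishes this direction. Setting $\Lambda=\varnothing$ recovers Proposition~\ref{prop:M_upper} as a free corollary.

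The inequality $\overline{M}\geq\overline{M}_\Lambda$ is then a purely combinatorial relaxation: dropping the constraint $|\cup_{j\in\Lambda}\mathcal{I}_j|\leq s$ only enlarges the feasible set, and the resulting unconstrained objective separates across $j$, so
$\max_{\text{relaxed}}\sum_{j\in\Lambda}[K\mathrm{x}_j]_k = \sum_{j\in\Lambda}\max_{\mathrm{x}_j\in X_j}[K\mathrm{x}_j]_k$.
Adding the identical $j\notin\Lambda$ terms closes the inequality componentwise, and hence in the vector sense intended in the statement.

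The delicate point I expect to be the main obstacle is verifying that $P_jx$ genuinely appears in $X_j$ with the recorded disagreement set, rather than merely being some algebraic solution of a sub-SSR equation. This hinges on the voting threshold being set precisely at $q+1-s$, combined with $q$-eigenvalue observability, to simultaneously control the disagreement sets $\mathcal{I}_j$ across all $j\in\Lambda$ by a single common $\Gamma$. I would write this out carefully to avoid the trap of controlling each $|\mathcal{I}_j|\leq s$ individually, which would not be enough to ensure $|\cup_{j\in\Lambda}\mathcal{I}_j|\leq s$.
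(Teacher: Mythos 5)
Your proof is correct and follows essentially the same route as the paper's: fix a component $k$, decompose the maximizer into generalized-eigenspace components, check that the $\Lambda$-restricted combination is feasible because all disagreement sets lie in the complement of a single $\Gamma$ (so their union has cardinality at most $s$), and compare the maxima termwise, with the inequality $\overline{M}\geq \overline{M}_\Lambda$ following from the same separability/relaxation observation the paper gets by decomposing the maximizer over $X_\Lambda$. The only cosmetic difference is that you re-derive feasibility directly from the definition of $\myset{X}_t^0$ using Lemma~\ref{lem:computation paper} Claim 2 and the counting argument, whereas the paper invokes Proposition~\ref{prop:algorithm_ssr_q_eigen} (i.e., $\myset{M}=\myset{X}_t^0$) and the recorded structure of elements of $\myset{M}$; both hinge on exactly the same facts, and your closing remark about controlling $\vert\cup_{j\in\Lambda}\mathcal{I}_j\vert$ by a common $\Gamma$ rather than each $\vert\mathcal{I}_j\vert$ individually is precisely the point the paper's argument relies on.
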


\begin{proof}
    Consider an arbitrary $k$. 
    Let $x^*\in \myset{M}$ be the maximizer of  $\max_{x\in\myset{M}}[Kx]_k$. From the construction of $\myset{M}$ in Algorithm~\ref{alg:M_upper_bound}, there exist $\{\mathrm{x}_j^*\}_{j_\in[r]}\in X_j$, where $X_j$ is from Step 3 of Algorithm \ref{alg:M_upper_bound}, such that $x^*=\sum_{j\in[r]}\mathrm{x}_j'$ and associated $\{\myset{I}_j^*\}_{j\in[r]}$ such that $\vert\cup_{j\in[r]}\myset{I}_j^*\vert<s$. Since $\Lambda\subseteq [r]$, it is necessary that $\vert\cup_{j\in\Lambda}\myset{I}_j^*\vert<s$, which implies that $\sum_{j\in_\Lambda}\mathrm{x}_j^*$ belongs to the set $X_\Lambda$ defined in Step 3 of Algorithm~\ref{alg:M_upper_bound_less_conservative}. As a result, we have:
    \begin{align*}
     [M(Y)]_k &=   [Kx^*]_k = \sum_{j\in[r]\setminus \Lambda}[K\mathrm{x}_j^*]_k + [K\sum_{j\in\Lambda}\mathrm{x}_j^*]_k \\
     &\leq \sum_{j\in[r]\setminus\Lambda} \max_{\mathrm{x}_j  \in X_j } [K\mathrm{x}_j]_k + \max_{\mathrm{x}_\Lambda  \in X_\Lambda } [K\mathrm{x}_\Lambda]_k = \overline{M}_\Lambda.
     \end{align*}
    The first equality holds because $X_t^0=\myset{M}$ from Proposition~\ref{prop:algorithm_ssr_q_eigen}, and the inequality holds because of the memberships $\mathrm{x}_j^*\in X_j$ and $\sum_{j\in\Lambda}\mathrm{x}_j^*\in X_\Lambda$.

    Let $\mathrm{x}_\Lambda'\in X_\Lambda$  be the maximizer of $\max_{\mathrm{x}_\Lambda  \in X_\Lambda } [K\mathrm{x}_\Lambda]_k$. From the construction of $\myset{X_\Lambda}$ in Algorithm~\ref{alg:M_upper_bound_less_conservative}, there exists $\{\mathrm{x}_j'\}_{j\in\Lambda}$ with $\mathrm{x}_j'\in X_j$ such that $\mathrm{x}_\Lambda' = \sum_{j\in\Lambda} \mathrm{x}_j'$. Therefore,
     \begin{align*}
     [\overline{M}_\Lambda]_k &= \sum_{j\in[r]\setminus\Lambda} \max_{\mathrm{x}_j  \in X_j } [K\mathrm{x}_j]_k + \max_{\mathrm{x}_\Lambda  \in X_\Lambda } [K\mathrm{x}_\Lambda]_k \\
     &= \sum_{j\in[r]\setminus\Lambda} \max_{\mathrm{x}_j  \in X_j } [K\mathrm{x}_j]_k +  \sum_{j\in\Lambda}[K\mathrm{x}_j']_k\\
     &\leq \sum_{j\in[r]} \max_{\mathrm{x}_j  \in X_j } [K\mathrm{x}_j]_k = [\overline M]_k.
     \end{align*}
     Since $k$ is arbitrary, we obtain $\overline{M} \geq \overline{M}_\Lambda\geq M(Y)$.
\end{proof}

Similar safety results as in Theorem~\ref{thm:safety} can also be established using the tighter upper bound estimate $\overline{M}_\Lambda(Y)$. In addition, the ordering $\overline{M}(Y)\geq\overline{M}_\Lambda(Y)\geq M(Y)$ imposes that the resulting cost $\|u(t)-u_{\textup{nom}}(t)\|$, with $u(t)$ computed with the respective bounds in the constraint, follows the same order. We omit these results here due to space limits.

\begin{algorithm}[h]
\caption{Compute a tighter upper bound on $M(Y)$  for $q$-eigenvalue observable systems, $s\leq q \leq 2s$}
\label{alg:M_upper_bound_less_conservative}
\begin{algorithmic}[1]
\Require{system dynamics in \eqref{eq:system}, input-output data $\mathcal{D}_t$, and the maximal number of attacks $s$}
\Ensure{$\overline{M}$: an upper bound of $  M(Y_t) $}
\State Execute Steps 1 - 5 of Algorithm~\ref{alg:M_upper_bound}.
\State Select a subset $\Lambda\subseteq [r]$ of all subspaces.
\State initialize $X_{\Lambda} = \{ \}$
\For{each $(\ldots, \mathrm{x}_j, \mathcal{I}_{j},\ldots) \in \bigtimes_{j \in \Lambda }\myset{S}_{j}$}
\If{$|\cup_{j\in \Lambda} \mathcal{I}_{j}| \leq s$}
\State $X_{\Lambda} \leftarrow X_{\Lambda} \cup\{ \mathrm{x}_{\Lambda}\} $ with $\mathrm{x}_{\Lambda} = \sum_{j\in \Lambda} \mathrm{x}_j $
\EndIf
\EndFor
\State  for $k\in [l]$, compute $ [M_{\Lambda}]_k  \leftarrow  \max_{x_\Lambda  \in X_{\Lambda} } [Kx_\Lambda]_k $
\State \Return $\overline{M}_\Lambda = \sum_{j\in [r]\setminus \Lambda} M_j + M_{\Lambda}$

\end{algorithmic}
\end{algorithm}

\subsection{Discussion on attacking strategies}
In this subsection, we briefly discuss ``good'' attacking strategies from the perspective of attackers. We specifically focus on how the attacker undermines our safety guarantees, exploiting its knowledge about the system dynamics, safety constraints, as well as our defense strategy. A good attacking strategy would meet the following criterion:
\begin{enumerate}
    \item the compromised measurements from any attacked sensors must be consistent with attack-free subsystem dynamics, i.e., $Y_i^j \in \textup{Range}(\mathcal{O}_i^j)$ for the attacked sensor $i$ and a subsystem $j$. Otherwise, this attack is ignored during \texttt{Pre-processing}.
    \item the attacked sensors have to synchronize to pass the voting threshold. Otherwise, the attack will be identified during \texttt{Threshold_voting}. In fact, for each subspace $V^{j}$, the attacker can inject at most $\left\lfloor \frac{q+1}{q+1 - s} \right\rfloor - 1  =  \left\lfloor  \frac{s}{q+1 - s} \right\rfloor $ fake initial substates into the set $X_j$ defined in Step 3 of Algorithm~\ref{alg:M_upper_bound}, where $\lfloor \cdot \rfloor$ is the floor function.
\end{enumerate}

\subsection{Other improvements}

Further improvements can be made to our proposed algorithms. For many systems, the eigenvalue observability indices may not be uniformly $q$ across all eigenvalues. Suppose that the eigenvalue observability  index corresponding to eigenvalue $\lambda_j$ is $q_j, q_j > q$. Then we can replace the threshold $q+ 1 -s $ with a more stringent condition $q_j + 1 -s$ for the subspace $j$ in all relevant algorithms, while all theoretical results remain intact.

Moreover, we can remove some attacked sensors online once they have been identified. For instance, suppose that an eigenvalue $\lambda_j$ is observable with respect to a sensor $i$, yet the substate $\mathrm{x}_{j,i}$,  obtained from the measurement of sensor $i$ in the subspace $V^j$, fails to pass the threshold voting. In this case, we can conclude that the sensor $i$ is compromised and we can safely exclude its measurements from all subsystems.

\section{Simulation comparison}
We illustrate the effectiveness of our approaches through extensive simulation experiments. All the simulations are done on an Intel i7-1365U laptop in Python 3.11. \footnote{ The code is publicly available at \url{https://github.com/xiaotan-git/safety-under-sensor-attacks}.}

\subsection{Problem setup}
We run experiments using our proposed algorithms on different \textit{random} systems to compare their computation efficiency. For each system $(A,B,C)$, we generate a matrix $A$ with state dimension $n$ while making sure there are $r = n$ total generalized eigenspaces. We accomplish this by constructing $A=RJR^{-1}$ from a diagonal matrix $J\in\mathbb{R}^{n\times n}$ with distinct eigenvalues and a random (invertible) matrix $R\in\mathbb{R}^{n\times n}$ as the eigenbasis. The matrix $B$ is chosen as an identity matrix to ensure the CBF condition~\eqref{eq:qp_control_constraint} is always feasible. As for the sensors, we randomly select $s$ out of total $p$ sensors to be compromised. We ensure that the system is $q$-eigenvalue observable by constraining each eigenvalue to be observable with respect to $q+1$ random sensors. To achieve this, we take a random $C_i' \in \mathbb{R}^{1\times n}$, construct the subspace it cannot observe by stacking relevant eigenbases, then we subtract its projection on that subspace. For the safety constraint, we consider the box constraint $Hx + g \geq 0$ with $H = \begin{bsmallmatrix}
    I_{n\times n} \\
    -I_{n\times n}
\end{bsmallmatrix}$ and $g = 10\mathbf{1}_n$.

To simulate a ``good" attack, the $s$ compromised sensors will provide dynamically consistent but false measurements, corresponding to state trajectories originating from fake initial states. For all simulations below, we initialize the system with the true initial state $x_{\text{true}}$, and we use $x_{\text{fake,1}}$ and $x_{\text{fake,2}}$ as two distinct fake initial states.

\vspace{-2mm}
\subsection{Runtime comparison between different approaches}
We compare the computation times between three different algorithms for input-output data $\mathcal{D}^t$ generated using the random system matrices, random input sequence, random attacked sensors, and random initial states. These algorithms include the brute-force SSR based on \eqref{eq:X0set} \cite{tan2024safety}, our extended decomposition-based SSR (Alg.~\ref{alg:SSR_by_decomposition_q_eigen}), and the computationally efficient algorithm using an upper bound of $M(Y)$ (Alg.~\ref{alg:M_upper_bound}). 

Theoretically, the total number of combinations the brute-force approach iterates over is $\frac{p!}{(p-s)!\,s!}$. Fig.~\ref{fig:computation_complexity}(a) shows the increase in the computation time with increasing sensor number $p$ and fixed system dimensions $n, m,$ number of sensor attacks $s$, and eigenvalue observability index $q$. We also observe that the decomposition-based approaches (Alg.~\ref{alg:SSR_by_decomposition_q_eigen} and~\ref{alg:M_upper_bound}) generally performs better than the brute-force approach when $p$ increases.  On the other hand, the extended decomposition-based SSR (Alg.~\ref{alg:M_upper_bound}) requires to enumerate, in the worst case, a total number of  $(\lfloor \frac{q+1}{q+1 -s}\rfloor)^r $ combinations among $r$ subspaces. Fig.~\ref{fig:computation_complexity}(b) shows this exponential trend in computation time with respect to the number of generalized eigenspaces $r$. This shows that the proposed SSR approach is less suitable for large-dimensional systems. We also note that in all these simulated runs, the plausible states identified by the brute-force SSR and by the extended decomposition-based SSR are the same, affirming Proposition~\ref{prop:algorithm_ssr_q_eigen}.

\begin{figure}[h]
	\centering
	\begin{subfigure}[t]{0.47\linewidth}
		\includegraphics[width=\linewidth]{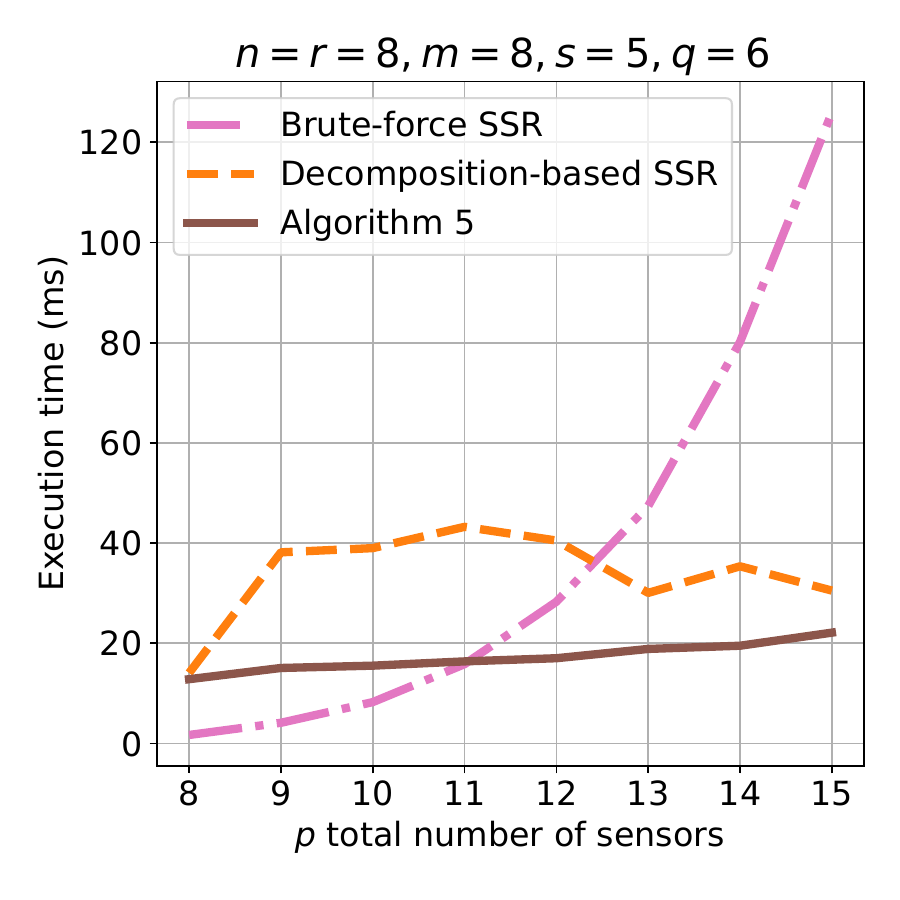}
		\caption{ Varying number of sensors. }   
	\end{subfigure} \hspace{1mm}
	\begin{subfigure}[t]{0.47\linewidth}
		\centering\includegraphics[width=\linewidth]{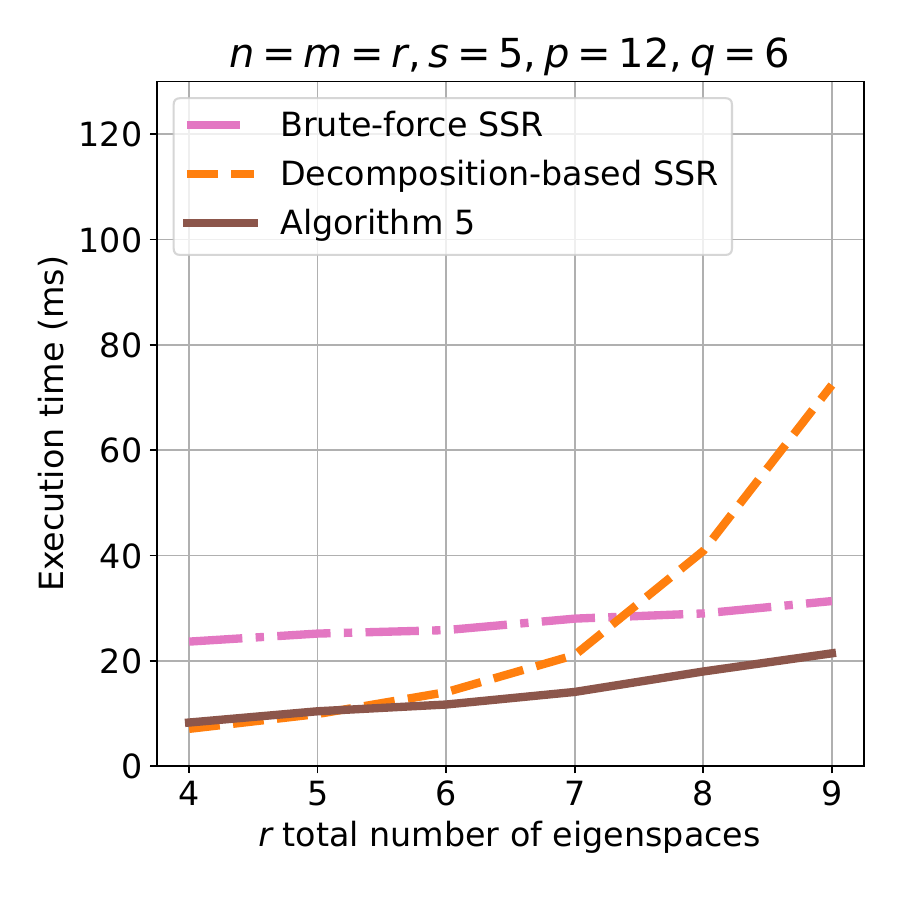}
		\caption{  Varying number of eigenspaces.}
	\end{subfigure}
	\caption{ Comparison of execution times of different algorithms. Each datapoint corresponds to an average of $100$ runs. Subfigure~(a): fixed number of attacks $s$, eigenvalue observability index $q$, and a varying number of total sensors $p$. Subfigure~(b): fixed number of sensors $p$, attacks $s$, the eigenvalue observability index $q$, and a varying Varying number of eigenspaces $r$.   }  
 \vspace{-2mm}
	\label{fig:computation_complexity}	
	\end{figure}

\subsection{Closed-loop performance comparison}
In this subsection, we show the safety assuring property of our proposed approaches. To facilitate our illustration, we choose the system matrix $A =  \tiny{\frac{1}{10}\begin{bsmallmatrix}
    8 &  4 &  0 &  0 \\
    4 & 6 & 2 & 0 \\
    0 & 2 & 5 & 3 \\
    0 & 0 & 3 & 7 \\
\end{bsmallmatrix}}$ and $B$ the identity matrix. Note that this system is inherently unstable because one eigenvalue of $A$ is greater than 1. We choose $p = 11$ random sensors such that the system has $q = 8 $ eigenvalue observability, with random $s = 5$ of them compromised. Choose $x_{\text{true}} = \mathbf{1}_{4}$, $x_{\text{fake,1}} = - \mathbf{1}_{4} $, $x_{\text{fake,2}} = 2 \mathbf{1}_{4}$, and $u_{nom}: \tau \mapsto 4(\sin(\tau), \cos(\tau),-\sin(\tau), -\cos(\tau)) $. 

The closed-loop system trajectories are presented for four different controllers: the nominal controller, the two-stage safety filters in \eqref{eq:qp_control_cost}-\eqref{eq:qp_control_constraint} with brute-force SSR and our proposed decomposition-based SSR, and our proposed efficient safety filter \eqref{eq:conservative_controller}. Fig.~\ref{fig:closed-loop traj} shows the nominal system trajectory evolves out of the safety region while all other trajectories remains safe. Note that the trajectories for the two two-stage approaches coincide, confirming Proposition~\ref{prop:algorithm_ssr_q_eigen}.
Fig.~\ref{fig:pointwise_cost} compares the step-wise costs between the three safety-filters. Consistent with our analysis, the computationally efficient approach (Alg.~\ref{alg:M_upper_bound}) has a larger but still comparable cost.

\begin{figure}
    \centering
    \includegraphics[width=0.9\linewidth]{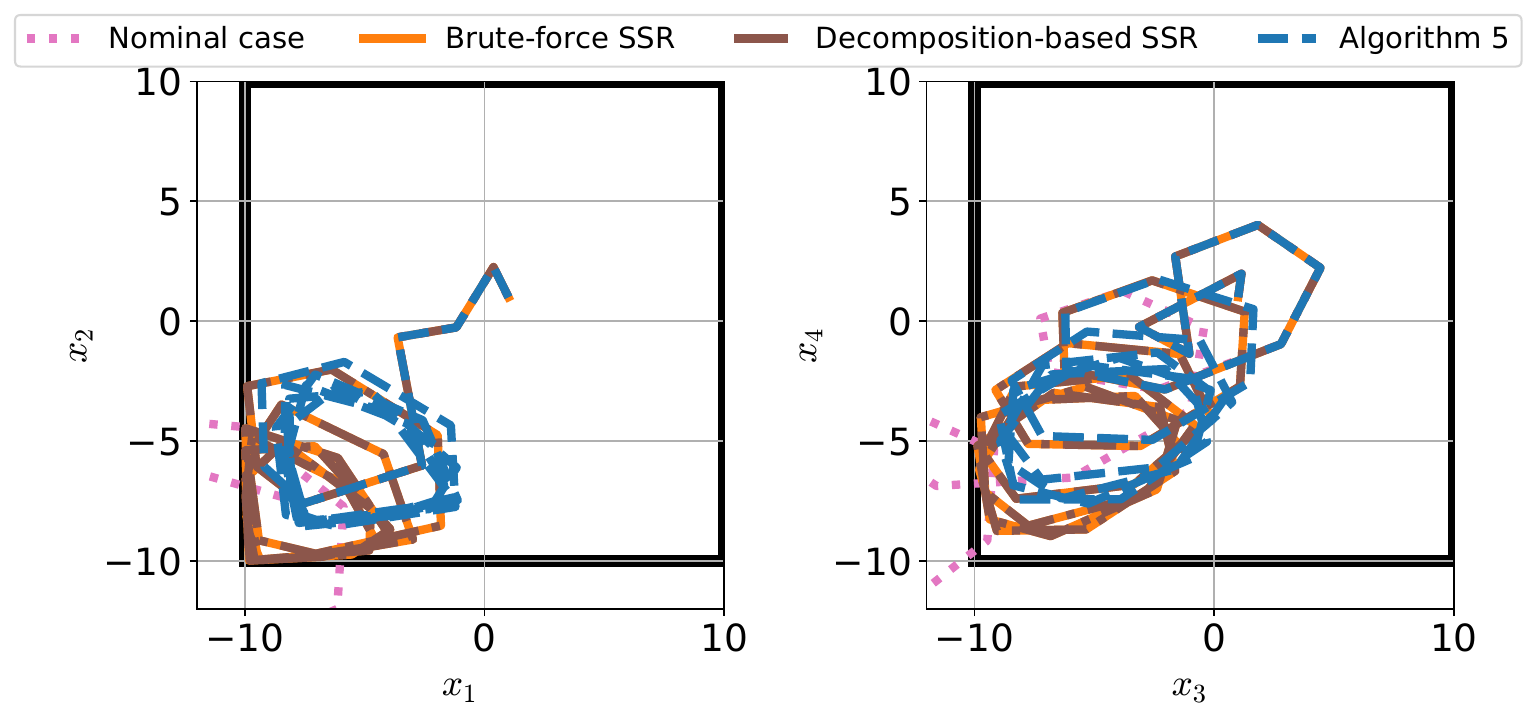}
    \vspace{-2.5mm}
    \caption{Comparison of the closed-loop system trajectories. While the nominal input derives the system out of the safety region (within the black box), all three safety filters render the system safe. Notably, the SSR-based approaches have the same safe trajectory, and the trajectory resulting from the computationally efficient approach shows a larger safety buffer. }
    \label{fig:closed-loop traj}
\end{figure}

\begin{figure}
    \centering
    \includegraphics[width=0.6\linewidth]{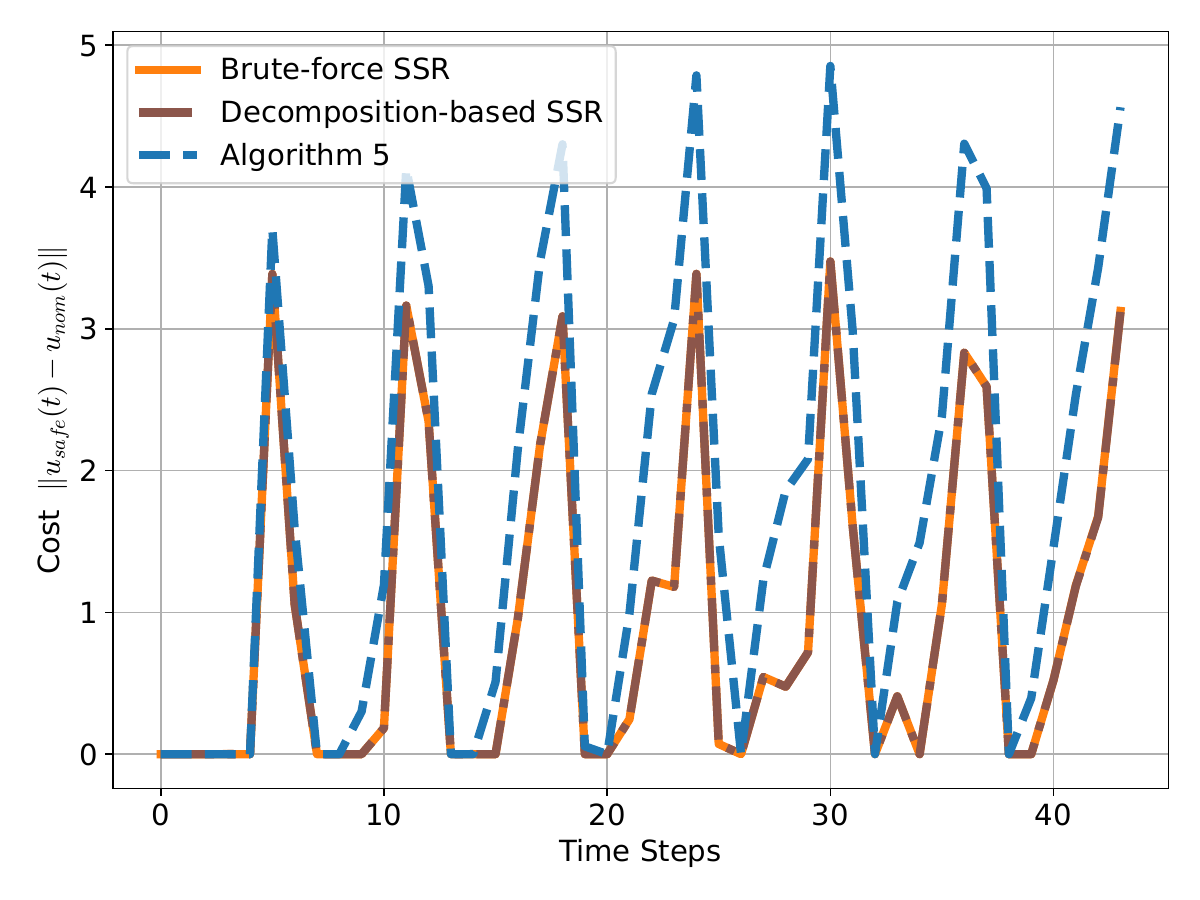}    \vspace{-2.5mm}
    \caption{Step-wise cost comparison of different approaches along the trajectory under the proposed controller \eqref{eq:conservative_controller}.}
    \label{fig:pointwise_cost}    %
    \setlength{\abovecaptionskip}{-18pt}

\end{figure}

\section{Conclusion}

In this paper, we consider safety control problem for linear systems under severe sensor attacks. Instead of using a two-stage approach that recovers the exact state, which will encounter the hard computational complexity issue, we propose an efficient approach that directly constructs the safety condition on the system input based on the CBF framework. We extend a recent eigenspace decomposition-based SSR algorithm to deal with severe sensor attacks. Its complexity, however, still grows exponentially with the number of eigenspaces. To tackle this, we propose an computationally inexpensive approach to derive an overapproximation of the exact CBF condition. We demonstrate its computational efficiency, safety assuring property, and its conservatism, empirically using randomly generated experiments. We will investigate the secure and safe control problem under measurement noise in a future work.

\section*{Appendix: Projection Matrices} \label{sec:appendix}
Consider vector spaces $\myset{S}, \myset{S}^1, \ldots, \myset{S}^r $ embedded in the Euclidean space $\mathbb{R}^m$. Suppose that the vector spaces $\{ \myset{S}^j \}_{j\in [r] }$ form a basis of the vector space $\myset{S}$. Let the matrices of basis vectors corresponding to the spaces $ \myset{S}^1, \ldots, \myset{S}^r$ be $S_1 \in   \mathbb{R}^{m\times n_1}, \ldots, S_r \in  \mathbb{R}^{m\times n_r}$, respectively.  Then, the following properties hold. (i)~$\text{Rank}(S_j) = n_j$, and $\text{dim}(\myset{S})  = \sum_{j\in [r]} n_j$. (ii) There exists a canonical projection $P_j: \myset{S} \to \myset{S}^j$ for each space $\myset{S}_j, j\in [r],$ such that (ii.1) $P_j$ is a linear map; (ii.2) $P_j s_j = s_j$ for any $s_j \in \myset{S}_j$; (ii.3) and $P_j s_k = 0$ for any $s_k \in \myset{S}_k, k\neq j$.

One such projection matrix can be computed by: 
\begin{equation*}
    P_j = [0_{m\times n_1}, \ldots, S_j, \ldots, 0_{m\times n_r}]  (S^\top S )^{-1} S^\top \quad j \in [r],
\end{equation*}
where the matrix $S: = [S_1, S_2,\ldots, S_r] \in \mathbb{R}^{m\times  \sum_{j\in [r]} n_j}$. From the first property, we know the matrix $S$ has full column rank so $(S^\top S )^{-1}$ is invertible. One can easily verify that the second property holds by observing that $P_j S = [0_{m\times n_1}, \ldots, S_j, \ldots, 0_{m\times n_r}]. $

\bibliographystyle{IEEEtran}
\bibliography{IEEEabrv,references}

\begin{thebibliography}{10}
\providecommand{\url}[1]{#1}
\csname url@samestyle\endcsname
\providecommand{\newblock}{\relax}
\providecommand{\bibinfo}[2]{#2}
\providecommand{\BIBentrySTDinterwordspacing}{\spaceskip=0pt\relax}
\providecommand{\BIBentryALTinterwordstretchfactor}{4}
\providecommand{\BIBentryALTinterwordspacing}{\spaceskip=\fontdimen2\font plus
\BIBentryALTinterwordstretchfactor\fontdimen3\font minus
  \fontdimen4\font\relax}
\providecommand{\BIBforeignlanguage}[2]{{%
\expandafter\ifx\csname l@#1\endcsname\relax
\typeout{** WARNING: IEEEtran.bst: No hyphenation pattern has been}%
\typeout{** loaded for the language `#1'. Using the pattern for}%
\typeout{** the default language instead.}%
\else
\language=\csname l@#1\endcsname
\fi
#2}}
\providecommand{\BIBdecl}{\relax}
\BIBdecl

\bibitem{amin2010stealthy}
S.~Amin, X.~Litrico, S.~S. Sastry, and A.~M. Bayen, ``Stealthy deception
  attacks on water scada systems,'' in \emph{Proceedings of the 13th ACM
  international conference on Hybrid systems: computation and control}, 2010,
  pp. 161--170.

\bibitem{liu2011false}
Y.~Liu, P.~Ning, and M.~K. Reiter, ``False data injection attacks against state
  estimation in electric power grids,'' \emph{ACM Transactions on Information
  and System Security}, vol.~14, no.~1, pp. 1--33, 2011.

\bibitem{kwon2013security}
C.~Kwon, W.~Liu, and I.~Hwang, ``Security analysis for cyber-physical systems
  against stealthy deception attacks,'' in \emph{2013 American control
  conference}.\hskip 1em plus 0.5em minus 0.4em\relax IEEE, 2013, pp.
  3344--3349.

\bibitem{shoukry2013non}
Y.~Shoukry, P.~Martin, P.~Tabuada, and M.~Srivastava, ``Non-invasive spoofing
  attacks for anti-lock braking systems,'' in \emph{Cryptographic Hardware and
  Embedded Systems-CHES 2013. Berlin, Heidelberg.}\hskip 1em plus 0.5em minus
  0.4em\relax Springer, 2013, pp. 55--72.

\bibitem{Ames2017}
A.~D. Ames, X.~Xu, J.~W. Grizzle, and P.~Tabuada, ``Control barrier function
  based quadratic programs for safety critical systems,'' \emph{IEEE
  Transaction on Automatic Control}, vol.~62, no.~8, pp. 3861--3876, 2016.

\bibitem{dean2021guaranteeing}
S.~Dean, A.~Taylor, R.~Cosner, B.~Recht, and A.~Ames, ``Guaranteeing safety of
  learned perception modules via measurement-robust control barrier
  functions,'' in \emph{Conference on Robot Learning}.\hskip 1em plus 0.5em
  minus 0.4em\relax PMLR, 2021, pp. 654--670.

\bibitem{vahs2023belief}
M.~Vahs, C.~Pek, and J.~Tumova, ``Belief control barrier functions for
  risk-aware control,'' \emph{IEEE Robotics and Automation Letters}, vol.~8,
  pp. 8565 -- 8572, 2023.

\bibitem{singletary2022safe}
A.~Singletary, M.~Ahmadi, and A.~D. Ames, ``Safe control for nonlinear systems
  with stochastic uncertainty via risk control barrier functions,'' \emph{IEEE
  Control Systems Letters}, vol.~7, pp. 349--354, 2022.

\bibitem{lin2022plug}
Y.~Lin, M.~S. Chong, and C.~Murguia, ``Plug-and-play secondary control for
  safety of {LTI} systems under attacks,'' \emph{arXiv preprint
  arXiv:2212.00593}, 2022.

\bibitem{zhang2022safe}
H.~Zhang, Z.~Li, and A.~Clark, ``Safe control for nonlinear systems under
  faults and attacks via control barrier functions,'' \emph{arXiv preprint
  arXiv:2207.05146}, 2022.

\bibitem{arnstrom2024stealthy}
D.~Arnstr{\"o}m and A.~M. Teixeira, ``Stealthy deactivation of safety
  filters,'' \emph{arXiv preprint arXiv:2403.17861}, 2024.

\bibitem{zhong2023secure}
B.~Zhong, S.~Liu, M.~Caccamo, and M.~Zamani, ``Secure-by-construction
  controller synthesis via control barrier functions,''
  \emph{IFAC-PapersOnLine}, vol.~56, no.~2, pp. 239--245, 2023.

\bibitem{mao2022computational}
Y.~Mao, A.~Mitra, S.~Sundaram, and P.~Tabuada, ``On the computational
  complexity of the secure state-reconstruction problem,'' \emph{Automatica},
  vol. 136, p. 110083, 2022.

\bibitem{chong2015observability}
M.~S. Chong, M.~Wakaiki, and J.~P. Hespanha, ``Observability of linear systems
  under adversarial attacks,'' in \emph{2015 American Control Conference
  (ACC)}.\hskip 1em plus 0.5em minus 0.4em\relax IEEE, 2015, pp. 2439--2444.

\bibitem{tan2024safety}
X.~Tan, P.~Ong, P.~Tabuada, and A.~D. Ames, ``Safety of linear systems under
  severe sensor attacks,'' in \emph{63rd IEEE Conference on Decision and
  Control (CDC)}.\hskip 1em plus 0.5em minus 0.4em\relax IEEE, 2024, pp. 1--7.

\bibitem{yong2016robust}
S.~Z. Yong, M.~Q. Foo, and E.~Frazzoli, ``Robust and resilient estimation for
  cyber-physical systems under adversarial attacks,'' in \emph{2016 American
  Control Conference}.\hskip 1em plus 0.5em minus 0.4em\relax IEEE, 2016, pp.
  308--315.

\bibitem{shoukry2018smc}
Y.~Shoukry, P.~Nuzzo, A.~L. Sangiovanni-Vincentelli, S.~A. Seshia, G.~J.
  Pappas, and P.~Tabuada, ``Smc: Satisfiability modulo convex programming,''
  \emph{Proceedings of the IEEE}, vol. 106, no.~9, pp. 1655--1679, 2018.

\bibitem{shoukry2015event}
Y.~Shoukry and P.~Tabuada, ``Event-triggered state observers for sparse sensor
  noise/attacks,'' \emph{IEEE Transactions on Automatic Control}, vol.~61,
  no.~8, pp. 2079--2091, 2015.

\bibitem{agrawal2017discrete}
A.~Agrawal and K.~Sreenath, ``Discrete control barrier functions for
  safety-critical control of discrete systems with application to bipedal robot
  navigation.'' in \emph{Robotics: Science and Systems}, vol.~13.\hskip 1em
  plus 0.5em minus 0.4em\relax Cambridge, MA, USA, 2017, pp. 1--10.

\bibitem{fawzi2014secure}
H.~Fawzi, P.~Tabuada, and S.~Diggavi, ``Secure estimation and control for
  cyber-physical systems under adversarial attacks,'' \emph{IEEE Transactions
  on Automatic control}, vol.~59, no.~6, pp. 1454--1467, 2014.

\end{thebibliography}

\end{document}